\documentclass[11pt,letterpaper]{article}
\usepackage[letterpaper,margin=1.05in]{geometry}
\usepackage[T1]{fontenc}
\usepackage[utf8]{inputenc}
\usepackage{mathpazo}
\usepackage[scaled=0.92]{helvet}
\usepackage{courier}
\usepackage{microtype}
\usepackage{amsmath,amssymb,amsthm,mathtools,bm}
\usepackage{booktabs,array,tabularx,longtable}
\usepackage{xcolor,graphicx,tikz}
\usepackage{enumitem}
\usepackage{hyperref}
\usepackage[nameinlink,capitalise]{cleveref}
\usepackage{fancyhdr}
\usepackage{titlesec}
\usepackage{caption}

\definecolor{ink}{HTML}{17213A}
\definecolor{accent}{HTML}{A36A00}
\definecolor{muted}{HTML}{5D6472}
\definecolor{rulecolor}{HTML}{D5D9E2}
\hypersetup{colorlinks=true,linkcolor=ink,citecolor=accent,urlcolor=accent,
  pdftitle={Separated Signal Libraries: Packing, Saturation, and Joint Spectral Limits},
  pdfauthor={Marc Nunes},
  pdfsubject={Spherical codes, selection rules, and joint limits for signal ensembles}}
\titleformat{\section}{\Large\bfseries\color{ink}}{\thesection.}{0.55em}{}
\titleformat{\subsection}{\large\bfseries\color{ink}}{\thesubsection.}{0.5em}{}
\titlespacing*{\section}{0pt}{2.2ex plus .5ex}{1ex}
\titlespacing*{\subsection}{0pt}{1.6ex plus .4ex}{.7ex}
\setlist{nosep,leftmargin=1.5em}
\renewcommand{\headrulewidth}{0.35pt}
\renewcommand{\headrule}{\hbox to\headwidth{\color{rulecolor}\leaders\hrule height \headrulewidth\hfill}}

\newtheorem{theorem}{Theorem}[section]
\newtheorem{proposition}[theorem]{Proposition}
\newtheorem{corollary}[theorem]{Corollary}
\newtheorem{lemma}[theorem]{Lemma}
\theoremstyle{definition}

\newtheorem{example}[theorem]{Example}
\theoremstyle{remark}
\newtheorem{remark}[theorem]{Remark}

\newcommand{\E}{\mathbb E}
\newcommand{\R}{\mathbb R}

\newcommand{\ip}[2]{\left\langle #1,#2\right\rangle}
\newcommand{\norm}[1]{\left\lVert #1\right\rVert}
\newcommand{\one}{\mathbf 1}

\DeclareMathOperator{\Var}{Var}
\DeclareMathOperator{\tr}{tr}
\DeclareMathOperator{\spanop}{span}

\newcommand{\op}{\mathrm{op}}
\newcommand{\Sphere}{\mathbb S}
\newcommand{\Prob}{\mathbb P}
\newcommand{\HH}{\mathsf H}
\newcommand{\calM}{\mathcal M}
\newcommand{\calC}{\mathcal C}
\newcommand{\calP}{\mathsf P}

\begin{document}
\hypersetup{pageanchor=false}
\begin{titlepage}
\color{ink}
\vspace*{0.55in}
{\sffamily\bfseries\fontsize{29}{34}\selectfont Separated Signal Libraries\par}
\vspace{0.18in}
{\sffamily\fontsize{17}{22}\selectfont Packing, Saturation, and Joint Spectral Limits\par}
\vspace{0.28in}
{\color{accent}\rule{1.45in}{2pt}}\par
\vspace{0.3in}
{\large Marc Nunes}\par
\vspace{0.07in}
{\small AlphaNova\quad$\cdot$\quad\href{mailto:marc.nunes@alphanova.tech}{marc.nunes@alphanova.tech}}\par
\vspace{0.25in}
{\small Research Note\quad$\cdot$\quad Version 7\quad$\cdot$\quad September 2026}\par
\vfill
\begin{minipage}{0.95\textwidth}
\small
\textbf{Abstract.}
We study libraries of cross-sectional signals: at each date, a vector of forecasts over $d$ assets intended to predict the cross-sectional return one period ahead. A cross-sectionally demeaned, unit-normalized signal is a point on a sphere, and its history over $T$ dates is a point on a product of $T$ spheres. A pairwise correlation cap on signal histories is therefore a minimum angular separation between points on that product, and growing a library under such a cap is a packing problem. This suggests a natural question: if a large library of signals is not too pairwise correlated, and is combined by equal weighting, does the equally weighted sum (EWS) tend to a known principal-component quantity as the number of signals grows? We answer this under explicit assumptions about how the library is filled.
\par\smallskip\noindent
Separation alone guarantees nothing. It fixes no limiting distribution; a saturated library covers the sphere yet can carry a permanently biased count; and near-maximum packing on a fixed domain forces uniform volume (by the Borodachov--Hardin--Saff best-packing asymptotic), which on the unrestricted sphere gives zero mean and no distinguished leading principal component (PC1). Alignment depends on the admission rule together with the distribution of candidates. Under the uniform product-volume benchmark, screening on positive average information coefficient (IC) yields a nonzero, target-aligned mean but an isotropic second moment, whereas a positive IC margin $\beta$ makes the target the unique PC1 at every finite $T$, with an explicit three-level spectrum whose leading eigenvalue tends to $\beta^2$ while the residual levels decay as $1/T$; margins of order $T^{-1/2}$ retain a positive admission rate but a vanishing eigengap. A finite residual-spectrum criterion certifies approximate EWS--PC1 alignment for a stopped library. Gilbert--Varshamov codes show that separation permits both outcomes: exponentially large positive-IC libraries exist whose EWS is PC1, and others whose EWS is orthogonal to PC1. Finally, $\log J=o(T)$ suffices for uniform estimation of all pairwise correlations in a pool of $J$ candidates from $T$ iid dates. All derived results are proved and checked numerically; no market data are used.
\end{minipage}
\vfill
{\footnotesize\color{muted}Keywords: spherical codes; signal selection; equal weighting; principal components; packing and covering; joint limits; residual gauge; information coefficient.}
\end{titlepage}
\hypersetup{pageanchor=true}
\setcounter{page}{1}

\section{The question: what does a diversity constraint buy?}

Throughout, a signal is a cross-sectional forecast vector over $d$ assets at each date, demeaned across assets and normalized to unit length. It is intended to predict the cross-sectional return one period ahead, and its value is judged against that realized return vector, the target of \cref{sec:positiveic}. A library of 3,000 signals over 20 assets---the motivating case of \emph{Large Signal Libraries} \cite{nuneslarge}---is large compared with the 19 available neutral directions at any one date. It can nevertheless be small compared with the number of well-separated \emph{histories} those assets support. This raises a question different from the one answered by the law of large numbers: what happens when the research process deliberately avoids signals that are too similar to those already admitted?

\emph{Large Signal Libraries} separates the equally weighted sum (EWS), or its scaled average, from signal and profit-and-loss (PnL) principal components. Its population limits require a specified sampling process. The earlier paper, \emph{Signal Correlation, IC, and PnL Dependence} \cite{nunesic}, explains how alignment with the return direction (target) separates longitudinal information from transverse signal geometry. Here the information coefficient (IC) is the realized cross-sectional correlation between a normalized signal and that target. We retain both distinctions. The new object here is a library constrained by pairwise separation, and the new issue is how its selection rule changes the empirical distribution.

The principal applied case is a consistently oriented corpus: every admitted signal has positive time-averaged correlation with the target. An intended equal-weight forecasting ensemble is not an unfiltered symmetric collection containing equally many positively and negatively associated signals. We retain that unrestricted collection as a geometric benchmark, then study the positive-IC restriction explicitly in \cref{sec:positiveic}. This restriction concerns an entire signal history; it does not require the signal to be positively correlated with the target on every date.

Three interpretations of ``fill the sphere'' must be kept apart. A separated set respects the distance rule. A saturated set admits no further point without breaking the rule. A maximum set has the largest possible cardinality. A greedy admission procedure may reach saturation; that does not establish maximum cardinality. An actual research process can stop far before either saturation or maximum capacity. Its finite candidate pool and discovery order then remain part of the model. Making a set dense says where its points occur, whereas an equal-weight average depends on how many occur in each region.

The conclusions depend on the admitted domain. On the unrestricted fixed sphere, near-maximum packing has a vanishing mean. Restricting admission to positive IC removes that symmetric cancellation at fixed history length, but by itself need not create a dominant eigendirection; a positive IC margin does, under the uniform product-history benchmark, at every finite history length, and the history length then determines only how strong that direction is. Saturation alone, by contrast, can leave additional selection bias. Along the time dimension, very large positive-IC libraries exist with either agreement or disagreement between EWS and PC1, so separation decides neither; and estimating the relationships used to select them requires conditions beyond geometric feasibility.

The results are deterministic unless a probability law is stated. The packing asymptotic used in \cref{sec:smallangle} is classical \cite{bhs}; the coding argument follows the Gilbert--Varshamov method \cite{gilbert,varshamov}; and the eigenvector estimates belong to standard symmetric-operator perturbation theory \cite{daviskahan,yws}. The contribution developed here is their application to selected signal histories, the explicit contrasting constructions, and the separation of geometric, spectral, and estimation limits. \paragraph{Scope.}
This is a population and finite-configuration study, with synthetic numerical checks and no market-data results. The capacity bounds count geometrically possible histories; they do not estimate the supply of predictive signals. Target-conditioned benchmarks and coded constructions describe geometry; they are not a causal method for forecasting unknown returns. Claims about real selection rules and forecasting performance require separate temporal evaluation. Independence and uniformity are assumed only where stated; no new optimal spherical-code bound is claimed.

For orientation, $K$ is the number of admitted signals, $T$ the number of dates, $\theta$ the minimum angle between histories, and $\beta$ an average-IC floor. The geometry is defined in \cref{sec:geometry}; \cref{tab:regimes} collects the main distinctions before the proofs, and \cref{tab:notation} the recurring notation.
\begin{table}[tb]
\centering\small
\caption{What the principal regimes establish. Asset dimension is fixed throughout.}
\label{tab:regimes}
\begin{tabularx}{\textwidth}{@{}>{\raggedright\arraybackslash}p{0.30\textwidth}>{\raggedright\arraybackslash}X@{}}
\toprule
Regime & Conclusion and required qualification \\
\midrule
Fixed $T,\theta>0$ & $K$ has a finite maximum; moments must be examined at that finite size. \\
Fixed $T$, $\theta\to0$ & Near-maximum occupancy gives uniform volume on the specified admitted domain. Saturation alone does not. \\
Uniform positive half & Mean is target-aligned, but the uncentered second moment is isotropic. \\
Uniform margin $\beta>0$ & Target is unique PC1 for every finite $T$. At fixed margin, mean norm tends to $\beta$ and gap to $\beta^2$ as $T\to\infty$. \\
Margin $\beta_T=z/\sqrt{qT}$, uniform benchmark & A fixed positive $z$ gives target PC1 for every admissible finite $T$, a nonzero limiting admission rate, and vanishing mean norm and gap. \\
Fixed acute $\theta$, $T\to\infty$ & Exponential size permits either EWS--PC1 alignment or disagreement, including with positive IC. \\
$T\to\infty$, $\theta\to0$, possibly $\beta\to0$ & Specify a joint array; packing, moment, and estimation errors need uniform control. \\
\bottomrule
\end{tabularx}
\end{table}

\begin{table}[tb]
\centering\small
\caption{Recurring notation. Symbols reused with a different meaning are listed together.}
\label{tab:notation}
\begin{tabularx}{\textwidth}{@{}>{\raggedright\arraybackslash}p{0.27\textwidth}X@{}}
\toprule
Symbol & Meaning \\
\midrule
$q=d-1$, $T$, $D=qT$, $K$ & Neutral dimension, number of dates, ambient history dimension, library size \\
$\calM_{q,T}$, $n_T$ & Product of $T$ spheres carrying unit histories \eqref{eq:history}; the target history \eqref{eq:icdomain} \\
$c_{ij}$, $\vartheta_{ij}$, $\theta$ & Pairwise history correlation and angle \eqref{eq:metrics}; the minimum admitted angle \\
$c(x)=\ip{x}{n_T}$, $\bar p_i$, $\beta$ & Average IC of a history; that of signal $i$; the IC margin \\
$E_{+,T}$, $E_{0,T}$, $E_{\beta,T}$ & Positive-IC, nonnegative-IC, and margin domains \\
$\mu$, $M$, $\Sigma$ & Library mean and uncentered and centered second moments \eqref{eq:moments}; $\rho=\norm\mu^2$ \\
$\lambda_0,\lambda_L,\lambda_\perp$ & Target, longitudinal-contrast, and transverse levels \eqref{eq:marginlevels}; $\lambda_1\ge\lambda_2\ge\cdots$ are ordered eigenvalues \\
$m_{\beta,T},u_{\beta,T},A_{\beta,T}$; $\kappa$ & Conditional moments of the margin benchmark; the exponential tilt \eqref{eq:tilt} \\
$\calP_E(\theta)$, $\zeta_j$ & Maximum $\theta$-separated size in $E$; occupancy relative to it \\
$\Delta_m$, $\mathsf v_m$, $\kappa_m$; $\Delta_{\mathrm w}$ & Packing density, unit-ball volume, and the constant in \eqref{eq:bhs}; the Voronoi weight discrepancy \eqref{eq:voronoi} \\
$C$ & A code (calligraphic $\calC$) in the text; $\E_\beta p_1p_2$ inside the proof of \cref{thm:margin} \\
\bottomrule
\end{tabularx}
\end{table}

\section{Geometry and the quantities being compared}\label{sec:geometry}

Let $H=\{s\in\R^d:\one^\top s=0\}$ and $q=\dim H=d-1\ge2$. After demeaning and normalization, a signal at date $t$ is $S_{i,t}\in\Sphere(H)\simeq\Sphere^{q-1}$. All dates initially have equal weight.

\begin{samepage}
\noindent Its normalized history is
\begin{equation}\label{eq:history}
 x_i=T^{-1/2}(S_{i,1},\ldots,S_{i,T})\in\calM_{q,T}
 :=T^{-1/2}\bigl(\Sphere^{q-1}\bigr)^T\subset\Sphere^{qT-1}.
\end{equation}
\end{samepage}
The history has norm one. The product manifold has intrinsic dimension $(q-1)T$, while the surrounding vector space has dimension $D=qT$. These dimensions have different roles: the former controls small-distance packing at fixed $T$; the latter bounds matrix rank.

For unit histories define
\begin{equation}\label{eq:metrics}
 c_{ij}=\ip{x_i}{x_j}=\frac1T\sum_t\ip{S_{i,t}}{S_{j,t}},\qquad
 \vartheta_{ij}=\arccos c_{ij},\qquad
 \norm{x_i-x_j}=2\sin(\vartheta_{ij}/2).
\end{equation}
A minimum angle $0<\theta\le\pi$ is therefore equivalent to $c_{ij}\le\cos\theta$ for $i\ne j$. This is an average of datewise cosines, not an average of angles. At $60^\circ$, the upper correlation threshold is $1/2$. A rule $|c_{ij}|\le1/2$ is stronger: it also excludes almost opposite signals. Our packing results use the one-sided rule unless stated otherwise; independently changing signal signs does not generally preserve that rule or the EWS.

For a finite library $\calC=\{x_1,\ldots,x_K\}$ write
\begin{equation}\label{eq:moments}
 \mu=\frac1K\sum_i x_i,\quad
 M=\frac1K\sum_i x_i\otimes x_i,\quad
 \Sigma=\frac1K\sum_i(x_i-\mu)\otimes(x_i-\mu).
\end{equation}
Here $(u\otimes v)z=\ip{v}{z}u$. We compare the EWS direction with the top eigenvector of the \emph{uncentered} second moment $M$. Cross-sectional demeaning has already occurred; centering across signal designs would instead give $\Sigma$ and remove the mean term of interest.

If $X$ has columns $x_i$, then the time-averaged signal Gram matrix is $G=X^\top X$. The nonzero eigenvalues of $G/K$ and $M=XX^\top/K$ agree. A unit leading design eigenvector $v$ maps to the history principal component $Xv/\sqrt{K\lambda_1}$. None of these objects is the return-covariance operator. Nor is a whole-history principal component the same object as the leading eigenvector of $K^{-1}\sum_i S_{i,t}\otimes S_{i,t}$ at one date.

\section{The applied case: positive average IC}\label{sec:positiveic}

\subsection{A hemisphere of histories, not a hemisphere at every date}
Let $Q_t\in\Sphere(H)$ be the normalized target direction. Use the same orthogonal map $G_t$ for all signals at date $t$, with $G_tQ_t=e$ for a fixed unit north pole $e\in H$. Define IC directly by $p_{i,t}=\ip{S_{i,t}}{Q_t}=\ip{G_tS_{i,t}}{e}$. Expressing the histories in this common target frame gives
\begin{equation}\label{eq:icdomain}
 n_T=T^{-1/2}(e,\ldots,e),\qquad
 \bar p_i=\frac1T\sum_t p_{i,t}=\ip{x_i}{n_T}.
\end{equation}
The admissible set for a strictly positive-average-IC corpus is
\begin{equation}\label{eq:positivehistory}
 E_{+,T}=\{x\in\calM_{q,T}:\ip{x}{n_T}>0\}.
\end{equation}
It is the part of the product manifold lying in an open hemisphere of the ambient history sphere. A signal with datewise ICs $0.6$ and $-0.2$ belongs to this set. Requiring $p_{i,t}>0$ at every date would instead restrict it to a product of datewise hemispheres, a substantially smaller domain. \Cref{fig:positiveic} illustrates the distinction. The frame change itself preserves all correlations; positivity is an additional admission condition.

\begin{figure}[tb]
\centering
\includegraphics[width=0.90\textwidth]{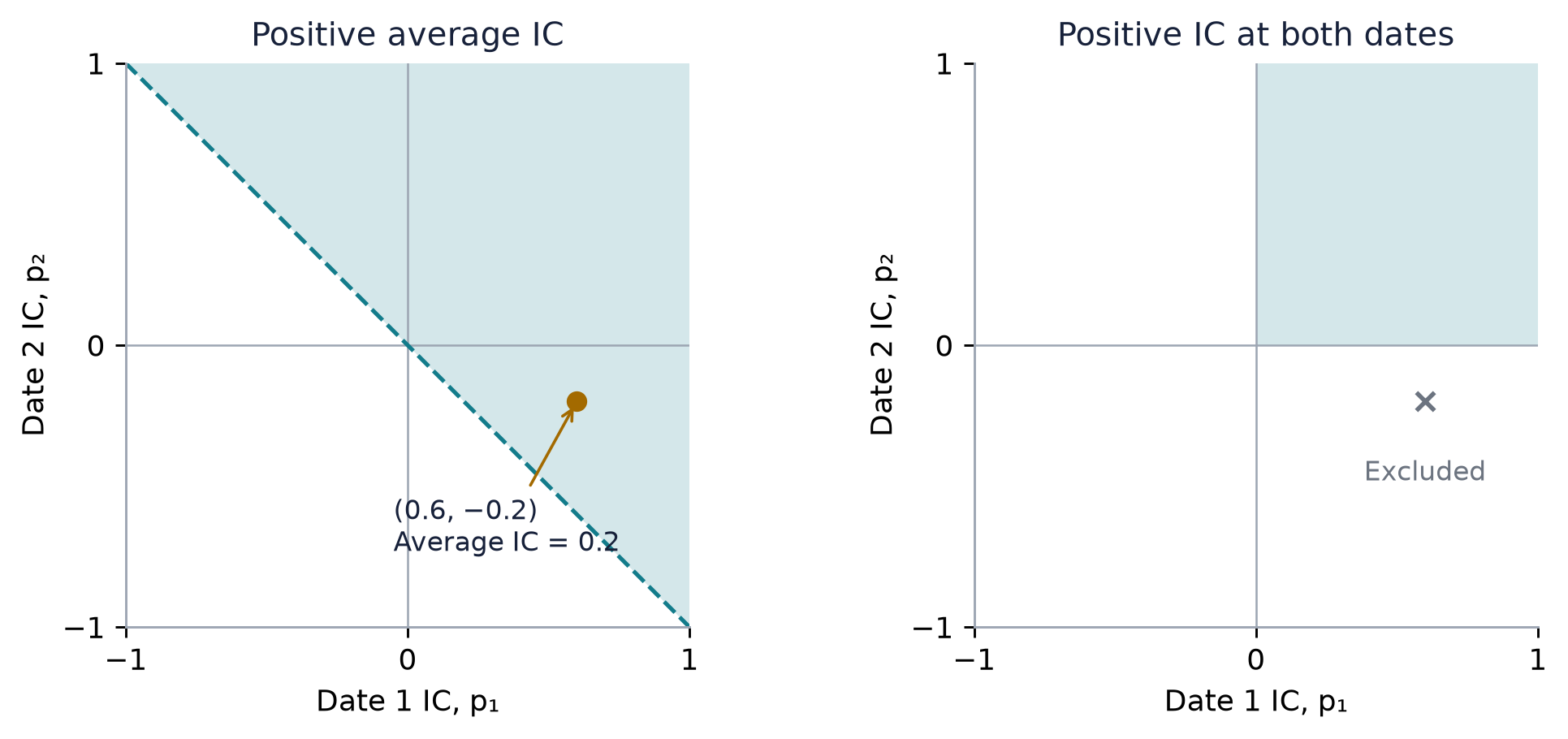}
\caption{For two dates, positive average IC permits the shaded triangle $p_1+p_2>0$, including signals with one negative date. Positivity at both dates permits only the upper-right quadrant. The marked signal has average IC $0.2$. The plot describes admissibility, not a uniform probability density over latitudes.}
\label{fig:positiveic}
\end{figure}

Negative-IC candidates can be excluded, or assigned a consistent reversed orientation before admission. The latter changes pairwise signed correlations: replacing $x_i$ by $s_ix_i$ changes $c_{ij}$ to $s_is_jc_{ij}$. A one-sided packing test must therefore be applied after orientation. This operation is separate from the common datewise target-frame transformation.

\subsection{What positivity guarantees for the ensemble}
\begin{proposition}[Positive mean IC and an IC-margin packing bound]\label{prop:positiveew}
For a nonempty finite library with $\bar p_i>0$ for every $i$, let $\bar p=K^{-1}\sum_i\bar p_i$. Then
\begin{equation}\label{eq:positiveew}
 \ip{\mu}{n_T}=\bar p>0,\qquad
 \norm\mu\ge\bar p,\qquad
 \ip{\mu/\norm\mu}{n_T}=\frac{\bar p}{\norm\mu}\ge\bar p.
\end{equation}
If its minimum angle is $\theta>0$ and $\bar p^2>\cos\theta$, then
\begin{equation}\label{eq:icpacking}
 K\le\frac{1-\cos\theta}{\bar p^2-\cos\theta}.
\end{equation}
In particular, if every $\bar p_i\ge\beta>0$, then $\norm\mu\ge\beta$ and the same bound holds with $\beta$ in place of $\bar p$ whenever $\beta^2>\cos\theta$.
\end{proposition}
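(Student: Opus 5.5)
The plan is to use linearity of the inner product and Cauchy--Schwarz for the first three relations, and then to bound $\norm\mu^2$ from above using only the separation hypothesis.

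\textbf{Mean relations.} Linearity of the inner product and \eqref{eq:icdomain} give
\[
\ip{\mu}{n_T}=K^{-1}\sum_i\ip{x_i}{n_T}=\bar p .
\]
This is positive because every $\bar p_i>0$. Since $\norm{n_T}=1$, Cauchy--Schwarz gives $\bar p=\ip{\mu}{n_T}\le\norm\mu$. In particular $\mu\ne0$, so the normalized direction is well defined and satisfies $\ip{\mu/\norm\mu}{n_T}=\bar p/\norm\mu$. Each $x_i$ is a unit vector, so $\norm\mu\le1$ by the triangle inequality, and therefore $\bar p/\norm\mu\ge\bar p$.

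\textbf{Packing bound.} Expand the squared norm of the mean:
\[
\norm\mu^2=K^{-2}\Bigl(K+\sum_{i\ne j}c_{ij}\Bigr).
\]
Minimum angle $\theta$ means $c_{ij}\le\cos\theta$ for $i\ne j$, by the discussion following \eqref{eq:metrics}. Hence
\[
\norm\mu^2\le K^{-2}\bigl(K+K(K-1)\cos\theta\bigr)=\cos\theta+\frac{1-\cos\theta}{K}.
\]
Combining with $\norm\mu^2\ge\bar p^2$ gives $K(\bar p^2-\cos\theta)\le1-\cos\theta$. When $\bar p^2>\cos\theta$, dividing by the positive quantity $\bar p^2-\cos\theta$ yields \eqref{eq:icpacking}.

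\textbf{Margin version.} If every $\bar p_i\ge\beta>0$, then $\bar p\ge\beta$, so $\norm\mu\ge\bar p\ge\beta$. When $\beta^2>\cos\theta$ we also have $\bar p^2\ge\beta^2>\cos\theta$. The map $a\mapsto(1-\cos\theta)/(a-\cos\theta)$ is decreasing for $a>\cos\theta$, so the bound with $\bar p^2$ is at most the bound with $\beta^2$.

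\textbf{Main subtlety.} The computations are elementary; the only point needing care is the sign of $\cos\theta$. When $\theta$ is obtuse, $\cos\theta<0$, but the bound on $\norm\mu^2$ uses only the one-sided inequality $c_{ij}\le\cos\theta$, so the expansion above holds for every $\theta\in(0,\pi]$. The hypothesis $\bar p^2>\cos\theta$ is exactly what makes the division legitimate.
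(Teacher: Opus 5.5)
Your proof is correct and follows essentially the same route as the paper. Linearity and Cauchy--Schwarz give the mean relations, and the pairwise bound $\norm\mu^2\le[1+(K-1)\cos\theta]/K$ combined with $\norm\mu^2\ge\bar p^2$ gives the packing bound; your explicit monotonicity step for the $\beta$ version spells out what the paper leaves implicit.
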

\begin{proof}
Linearity gives the first identity. Cauchy--Schwarz gives $\norm\mu\ge\bar p$, and averaging unit vectors gives $\norm\mu\le1$. The pairwise bound yields
$\norm\mu^2\le[1+(K-1)\cos\theta]/K$. Combining it with $\norm\mu^2\ge\bar p^2$ and rearranging gives \eqref{eq:icpacking}.
\end{proof}

The bound is attained, at their separation angle, by vectors with a common latitude $\beta$ and regular-simplex transverse components whenever dimension permits. Their off-diagonal correlation is $\beta^2-(1-\beta^2)/(K-1)$. At $60^\circ$ separation, this bound requires corpus-average IC above $1/\sqrt2\approx0.707$ and is vacuous at the weak-IC scales motivating the paper. It becomes informative near orthogonality: $\theta=85^\circ$ and $\bar p=0.3$ give $K\le320.94$, hence $K\le320$.

Under a fixed selected population supported on $\bar p_i>0$, the population mean IC is strictly positive. A law of large numbers for that fixed population therefore gives a nonzero EWS limit. Strict positivity alone does not control arrays in which the population or history length changes: all $\bar p_i$ may remain positive while their average tends to zero. A positive lower limit for the corpus-average IC is sufficient to prevent this. The individual floor $\bar p_i\ge\beta>0$ is one convenient stronger condition, not a necessary one. A positive projection onto the target also does not force parallelism with it: a mean transverse component can remain.

These identities concern the unrenormalized EWS history and its single history-level normalization. Normalizing each date's EWS separately changes the weights given to dates and is a different statistic. Positive unweighted mean IC likewise does not by itself fix dispersion-weighted mean PnL; \cref{sec:gauge} retains that distinction.

\subsection{Selecting a positive half changes the first moment, not every spectrum}
\begin{proposition}[Antipodal selection preserves the second moment]\label{prop:antipodal}
Let $P$ be a distribution of unit histories invariant under $x\mapsto-x$, and assume $P(\ip{x}{n_T}=0)=0$. With $c(x)=\ip{x}{n_T}$, selection on $c(x)>0$ gives
\begin{equation}\label{eq:antipodal}
 \E_P[x\otimes x\mid c>0]=\E_P[x\otimes x],\qquad
 \E_P[x\mid c>0]=\E_P[\operatorname{sgn}(c)x].
\end{equation}
Its target projection is $\E_P|c|>0$. The selected mean need not be parallel to the target without additional symmetry.
\end{proposition}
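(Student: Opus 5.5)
The plan is to reduce everything to the symmetry $x\mapsto -x$ and the observation that the selection event splits the sample space into two halves that this map exchanges. Write $A_+=\{c>0\}$ and $A_-=\{c<0\}$. Since $c(-x)=-c(x)$, the map $x\mapsto-x$ sends $A_+$ onto $A_-$. Invariance of $P$ then gives $P(A_+)=P(A_-)$. Because $P(c=0)=0$, these two probabilities sum to one, so each equals $1/2$. In particular the conditioning event has positive probability, and the conditional expectations are elementary ratios, with no disintegration needed.

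\emph{Second moment.} Any even integrand $f(x)=f(-x)$ satisfies $\E_P[f\one_{A_+}]=\E_P[f\one_{A_-}]$ by invariance. Apply this entrywise to $f(x)=x\otimes x$, which is bounded since $\norm x=1$, so all moments exist. This gives
\[
\E_P[x\otimes x]=\E_P[x\otimes x\,\one_{A_+}]+\E_P[x\otimes x\,\one_{A_-}]=2\,\E_P[x\otimes x\,\one_{A_+}].
\]
Dividing by $P(A_+)=1/2$ yields the first identity in \eqref{eq:antipodal}.

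\emph{First moment.} For the odd integrand $x$, invariance gives $\E_P[x\one_{A_-}]=-\E_P[x\one_{A_+}]$. Hence
\[
\E_P[\operatorname{sgn}(c)x]=\E_P[x\one_{A_+}]-\E_P[x\one_{A_-}]=2\,\E_P[x\one_{A_+}]=\E_P[x\mid c>0],
\]
where the null set $\{c=0\}$ is ignored. Taking the inner product with $n_T$ gives the target projection $\E_P[\operatorname{sgn}(c)c]=\E_P|c|$. This quantity is strictly positive because $|c|>0$ holds $P$-almost surely; a nonnegative variable that is a.s.\ positive has positive mean.

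\emph{Non-parallelism.} It remains to show by example that the selected mean need not be parallel to $n_T$. I would take $P$ to be an equal mixture of point masses at $\pm x_0$, where $x_0\in\calM_{q,T}$ has $\ip{x_0}{n_T}\in(0,1)$. For instance, with $T\ge1$ and $q\ge2$, set every date to $S_t=\cos\alpha\,e+\sin\alpha\,f$ with $f\perp e$ and $0<\alpha<\pi/2$. This $P$ is antipodally invariant and puts no mass on $c=0$. The selected mean is $x_0$, which has a nonzero transverse component and so is not parallel to $n_T$. The symmetric-$P$ identities are routine; the only point needing care is to check that the example really lies in $\calM_{q,T}$ and has $c\neq0$ almost surely, and the construction above does this directly.
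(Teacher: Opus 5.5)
Your proof is correct and follows the paper's argument. The antipodal map exchanges $\{c>0\}$ and $\{c<0\}$, each of probability $1/2$, so the even integrand $x\otimes x$ gets equal contributions from both halves, while the odd integrand $x$ folds into $\operatorname{sgn}(c)x$. Your two-point example $\pm x_0$ also gives an explicit construction for the non-parallelism claim, which the paper asserts without one.
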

\begin{proof}
The antipodal map exchanges the positive and negative events, each of probability $1/2$. It leaves $x\otimes x$ unchanged, so the two conditional second moments equal the unconditional one. The map $x\mapsto\operatorname{sgn}(c)x$ sends both halves to the positive half with its conditional distribution, giving the mean identity and its target projection.
\end{proof}

This selection uses a specified symmetric population. Folding an existing code can change its pairwise signed correlations and destroy its packing property.

\begin{theorem}[The uniform positive-IC history benchmark]\label{thm:positivehalf}
Let $P_T$ be uniform product volume on $\calM_{q,T}$ in the target frame, and retain only $c(x)>0$. Let $p_t$ be the north coordinate of a uniform point on $\Sphere^{q-1}$ at date $t$, before conditioning, so that the $p_t$ are iid with mean zero and variance $1/q$. Define
\begin{equation}\label{eq:positivehalf}
 a_{q,T}=\E\left|\frac1T\sum_t p_t\right|.
\end{equation}
The selected population has moments
\begin{equation}\label{eq:positivehalfmoments}
 \mu_+=a_{q,T}n_T\ne0,\qquad
 M_+=\frac{I_{qT}}{qT}.
\end{equation}
For fixed $q$ and $T\to\infty$,
\begin{equation}\label{eq:positivehalfrate}
 a_{q,T}\sim\sqrt{\frac{2}{\pi qT}}.
\end{equation}
For one date, $a_{q,1}=\Gamma(q/2)/[\sqrt\pi\,\Gamma((q+1)/2)]$.
\end{theorem}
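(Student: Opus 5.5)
The plan is to reduce everything to \cref{prop:antipodal} and then compute two scalar quantities. Uniform product volume on $\calM_{q,T}$ is the law of $x=T^{-1/2}(S_1,\ldots,S_T)$ with $S_t$ iid uniform on $\Sphere^{q-1}$. This law is invariant under $x\mapsto-x$, because each factor is. The event $c(x)=0$ has probability zero, since $c=T^{-1}\sum_t p_t$ and $p_1$ has a density. So \cref{prop:antipodal} applies and gives $M_+=\E[x\otimes x]$ and $\mu_+=\E[\operatorname{sgn}(c)x]$. The map $S_t\mapsto-S_t$ at a single date shows $\E[S_t\otimes S_s]=0$ for $t\ne s$, and rotational invariance gives $\E[S_t\otimes S_t]=I_q/q$. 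The unconditional second moment is therefore block diagonal, $T^{-1}\cdot I_q/q$ in each block, which is $I_{qT}/(qT)$.

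For the mean I would split each $S_t=p_te+R_t$, where $R_t\perp e$. Conditional on all north coordinates $(p_1,\ldots,p_T)$, each $R_t$ is distributed symmetrically under reflections of $e^\perp$. The sign of $c$ depends only on the $p$'s, so the transverse components of $\mu_+$ vanish. The date-$t$ block is then $T^{-1/2}\E[\operatorname{sgn}(c)p_t]\,e$. By exchangeability across dates, $\E[\operatorname{sgn}(c)p_t]=\E[\operatorname{sgn}(c)\,c]=\E|c|=a_{q,T}$, the same for every $t$. Hence $\mu_+=a_{q,T}T^{-1/2}(e,\ldots,e)=a_{q,T}n_T$. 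It is nonzero because $c$ is not almost surely zero.

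For the one-date formula, the density of $p_1$ on $[-1,1]$ is proportional to $(1-p^2)^{(q-3)/2}$. With normalizing constant $B(1/2,(q-1)/2)$, we get
\[
\E|p_1|=\frac{2\int_0^1 p(1-p^2)^{(q-3)/2}\,dp}{B(1/2,(q-1)/2)}=\frac{2/(q-1)}{B(1/2,(q-1)/2)}.
\]
Simplifying with $\Gamma(1/2)=\sqrt\pi$ and $\Gamma((q+1)/2)=\frac{q-1}{2}\Gamma((q-1)/2)$ yields $\Gamma(q/2)/[\sqrt\pi\,\Gamma((q+1)/2)]$. The case $q=2$, where the density has an integrable singularity at the endpoints, should be checked, but the same formula holds.

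The asymptotic \eqref{eq:positivehalfrate} is the step needing the most care. The $p_t$ are iid, bounded, with mean $0$ and variance $1/q$, so the CLT gives $\sqrt{qT}\,c\Rightarrow N(0,1)$. Convergence in law alone does not give convergence of $\E\sqrt{qT}|c|$. However, $\E(\sqrt{qT}\,c)^2=1$ for all $T$, so $\{\sqrt{qT}|c|\}$ is uniformly integrable. Hence $\sqrt{qT}\,a_{q,T}\to\E|N(0,1)|=\sqrt{2/\pi}$, which is the stated equivalence.
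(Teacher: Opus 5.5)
Your proposal is correct and follows essentially the same route as the paper. You apply \cref{prop:antipodal} for the selected second moment and use transverse symmetry plus date exchangeability to reduce the mean to $\E|c|\,n_T$. You then obtain the rate from the CLT with uniform integrability from $\E(\sqrt{qT}\,c)^2=1$, and the one-date value by integrating the density proportional to $(1-p^2)^{(q-3)/2}$; your explicit Beta-function evaluation, which also covers $q=2$ where it gives $2/\pi$, fills in arithmetic the paper leaves implicit.
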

\begin{proof}
Uniform product volume is antipodally symmetric and has unconditioned second moment $I_{qT}/(qT)$. The coordinate average has a continuous distribution, so \cref{prop:antipodal} preserves this second moment after selection. Independent transverse rotations force the selected mean to have only north coordinates; date permutations make those coordinates equal. Its projection onto $n_T$ is $\E|T^{-1}\sum_t p_t|$, proving \eqref{eq:positivehalfmoments}. The central limit theorem gives $Y_T:=\sqrt{qT}\,T^{-1}\sum_t p_t\Rightarrow N(0,1)$. The second moment of $Y_T$ is one for every $T$, giving uniform integrability of the absolute values; their expectations therefore tend to $\sqrt{2/\pi}$. The one-date expression follows by integrating the sphere-coordinate density proportional to $(1-p^2)^{(q-3)/2}$.
\end{proof}

The centered covariance makes the distinction sharper:
\begin{equation}\label{eq:centeredpositivehalf}
 \Sigma_+=\frac{I_{qT}}{qT}-a_{q,T}^2 n_T\otimes n_T.
\end{equation}
The target is its unique \emph{smallest} eigendirection, with eigenvalue $1/(qT)-a_{q,T}^2=\Var(c\mid c>0)$; every orthogonal direction has eigenvalue $1/(qT)$. Centering across designs reverses the role of the very direction the ensemble retains.

In original coordinates, the benchmark mean signal at date $t$ is $a_{q,T}Q_t$. That target alignment uses the benchmark symmetries, beyond the positive-IC admission rule. Thus filling the positive-IC half at fixed history length can produce a target-aligned EWS while leaving every ambient direction tied in the uncentered second moment. It does not produce a unique PC1. Growing $T$ can then make that EWS small despite strictly positive mean IC for every retained history. The benchmark conditions on realized finite-history IC; it does not assert positive long-run IC for the underlying iid date process. A uniform positive margin excludes this particular limit.

The strict set $E_{+,T}$ is open. For compact-domain packing statements, use its closure $E_{0,T}=\{c\ge0\}$ or a closed positive-margin domain $E_{\beta,T}=\{c\ge\beta\}$ with $0<\beta<1$. These are compact subsets of a smooth manifold with positive volume and satisfy the rectifiability hypotheses specified in \cref{sec:bhs}. Explicitly, a closed subset of the compact $C^1$ manifold $\calM_{q,T}$ is a finite union of Lipschitz images of compact sets, one for each of finitely many coordinate charts, so its Minkowski content equals its Hausdorff measure \cite[Theorem~3.2.39]{federer} and the best-packing theorem of Borodachov, Hardin, and Saff \cite[Theorem~2.2]{bhs} applies to it. The boundary $\{c=0\}$ has zero product volume; boundary smoothness is not required. The fixed-domain volume limit in \cref{thm:uniform}, applied to $E_{0,T}$, therefore gives the benchmark in \cref{thm:positivehalf}; the same conclusion holds for strictly positive-IC codes approaching the capacity of that closure. This claim concerns a fixed $T$ and near-maximum occupancy of the admitted domain, not every positive-IC sampling rule.

Finally, positive IC does not invalidate the disagreement example later in the paper. In \cref{thm:orthogonalcode}, choosing $m_t=Q_t$ makes every signal's IC equal to $\sqrt{0.1}>0$ at every date, yet its unique signal PC1 is orthogonal to EWS. The competing common component is transverse to the target. Its signed loadings cancel in the average but reinforce its second moment; the corpus does not contain a negatively predictive half.

\subsection{A positive margin gives a leading target under uniform sampling}\label{sec:margin}
A positive screen and a positive margin have different spectral effects. Let $\E_\beta$ denote uniform product volume conditioned on $c=T^{-1}\sum_t p_t\ge\beta$, where $0<\beta<1$. The theorem below characterizes the mean and second moment of this conditional law; for them, put
\[
 m_{\beta,T}=\E_\beta c,\qquad u_{\beta,T}=\E_\beta c^2,\qquad A_{\beta,T}=\E_\beta p_1^2.
\]
Date permutations and independent transverse rotations preserve this conditional law. These are symmetries of the uniform benchmark, not consequences of admission alone. Within the product geometry their group is $O(q-1)^T\rtimes S_T$; the ambient sphere has a larger stabilizer that does not preserve the product manifold.

\begin{theorem}[A positive margin makes the target uniquely leading]\label{thm:margin}
For every $q\ge2$, $T\ge1$, and $0<\beta<1$, uniform product volume conditioned on $c\ge\beta$ has mean $m_{\beta,T}n_T$ and unique leading second-moment direction $n_T$. For $T\ge2$, its spectral levels are (the case $T=1$ is stated after the proof):
\begin{equation}\label{eq:marginlevels}
\begin{aligned}
 \lambda_0&=u_{\beta,T} &&\text{on }\spanop\{n_T\},\\
 \lambda_L&=\frac{A_{\beta,T}-u_{\beta,T}}{T-1}
 &&\text{on longitudinal contrasts }(\dim=T-1),\\
 \lambda_\perp&=\frac{1-A_{\beta,T}}{T(q-1)}
 &&\text{on transverse histories }(\dim=T(q-1)).
\end{aligned}
\end{equation}
Here longitudinal contrasts have the form $T^{-1/2}(b_1e,\ldots,b_Te)$ with $\sum_t b_t=0$. The strict inequalities $\lambda_0>\lambda_L$ and $\lambda_0>\lambda_\perp$ hold for every positive margin. Writing $\lambda_1\ge\lambda_2\ge\cdots$ for the ordered eigenvalues, this means $\lambda_1=\lambda_0$ and $\lambda_2=\max(\lambda_L,\lambda_\perp)$. A coarse quantitative lower bound, useful only when its numerator is positive, is
\begin{equation}\label{eq:marginfinitegap}
 T\beta^2>1
 \quad\Longrightarrow\quad
 \lambda_1-\lambda_2\ge\frac{T\beta^2-1}{T-1}>0.
\end{equation}
\end{theorem}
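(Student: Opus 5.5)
The plan is to read off the block structure of the second moment from the symmetries of the conditioned benchmark, and then to prove the two eigenvalue comparisons separately. Write each date as $S_t=p_te+r_t$ with $r_t\perp e$ and $\norm{r_t}^2=1-p_t^2$. Put $C:=\E_\beta p_1p_2$, the off-diagonal north–north moment (this is the $C$ of \cref{tab:notation}).

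\emph{Symmetries and the mean.} The conditioning event depends only on $(p_t)$. Hence independent transverse rotations, and in particular the reflections $r_t\mapsto-r_t$ at a single date, preserve the law. These reflections kill every odd transverse moment: $\E_\beta r_t=0$, $\E_\beta p_s r_t=0$, and $\E_\beta r_s\otimes r_t=0$ for $s\ne t$. The same invariance gives $\E_\beta r_t\otimes r_t=\frac{1-A_{\beta,T}}{q-1}I_{e^\perp}$. Date permutations make the north block exchangeable. Therefore $\mu=T^{-1/2}(\E_\beta p_1)(e,\ldots,e)=m_{\beta,T}n_T$.

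\emph{The three levels.} The north block of $M$ is $T^{-1}$ times the $T\times T$ matrix with diagonal $A_{\beta,T}$ and off-diagonal $C$. Its eigenvalue on $\one$ is $u_{\beta,T}=\ip{n_T}{Mn_T}=\E_\beta c^2$, so
\[
 C=\frac{Tu_{\beta,T}-A_{\beta,T}}{T-1}.
\]
Its eigenvalue on contrasts is $(A_{\beta,T}-C)/T=(A_{\beta,T}-u_{\beta,T})/(T-1)$. The transverse block contributes $\lambda_\perp$ with multiplicity $T(q-1)$. This proves \eqref{eq:marginlevels}. As a check, the trace is $u+(A-u)+(1-A)=1$.

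\emph{The coarse gap.} Two inequalities drive \eqref{eq:marginfinitegap}. First, $u_{\beta,T}\ge\beta^2$ because $c\ge\beta>0$ on the event. Second, $c^2\le T^{-1}\sum_tp_t^2$ by Jensen, so $A_{\beta,T}\ge u_{\beta,T}\ge\beta^2$. Consequently
\[
 \lambda_L\le\frac{1-\beta^2}{T-1},\qquad
 \lambda_\perp\le\frac{1-\beta^2}{T(q-1)}\le\frac{1-\beta^2}{T-1}.
\]
Subtracting from $\lambda_0\ge\beta^2$ gives $\lambda_1-\lambda_2\ge(T\beta^2-1)/(T-1)$. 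When this is positive it already shows that $n_T$ is the unique leading direction.

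\emph{Strictness for every positive margin (the main obstacle).} The comparison $\lambda_0>\lambda_L$ is equivalent to $Tu_{\beta,T}>A_{\beta,T}$, that is, to $C>0$: positive association of two dates' ICs under conditioning on a large average. I would first try the decomposition $p_t=c+\delta_t$ with $\sum_t\delta_t=0$. This turns the condition into
\[
 (T-1)\,\E_\beta c^2>T^{-1}\,\E_\beta\sum_t\delta_t^2.
\]
I would then compare with the unconditioned law, where $C=0$ and hence the two sides are equal. Conditioning on $c\ge\beta$ raises $\E c^2$. I would try to show that it does not raise $\E\sum_t\delta_t^2$ by as much, using the stochastic monotonicity in $c$ of the conditional law of the $\delta$'s. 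The vehicle for this is the exponential tilt \eqref{eq:tilt}: write the conditioned law as a mixture over the level $c$ and use that, at fixed level, $\E[\sum_t\delta_t^2\mid c]$ is nonincreasing in $c$ on $c>0$ because the cube $[-1,1]^T$ narrows there. This monotonicity is exactly the step I expect to be delicate. It holds for $q\ge3$, where the one-date density $(1-p^2)^{(q-3)/2}$ is log-concave, and needs a separate check for the arcsine case $q=2$.

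The comparison $\lambda_0>\lambda_\perp$ is $u_{\beta,T}>(1-A_{\beta,T})/(T(q-1))$. I would prove it by the same device. Without conditioning both sides equal $1/(qT)$. Conditioning on $c\ge\beta$ raises $u$ and raises $A$; the latter holds because $\E[p_1^2\mid c]$ is nondecreasing in $|c|$. Each effect moves the inequality in the right direction, so it becomes strict. Finally, $\lambda_1=\lambda_0$ and $\lambda_2=\max(\lambda_L,\lambda_\perp)$ follow at once.
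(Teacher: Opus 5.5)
\textbf{Verdict: there is a genuine gap.} Your treatment of the symmetries, the mean, the three levels and the coarse bound \eqref{eq:marginfinitegap} is correct and is essentially the paper's argument. The problem is the strict inequalities, which are what make $n_T$ the unique leading direction for \emph{every} positive margin.

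\textbf{The transverse comparison.} For $\lambda_0>\lambda_\perp$ you need $A_{\beta,T}\ge1/q$, since $u_{\beta,T}>1/(qT)$ strictly. You derive it from the claim that $\E[p_1^2\mid c]$ is nondecreasing in $|c|$, and that claim is false. For $q=3$ the one-date law is uniform on $[-1,1]$. With $T=2$ and $c>0$, the coordinate $p_1$ is then uniform on $[2c-1,1]$, so $\E[p_1^2\mid c]=c^2+(1-c)^2/3$, which decreases on $(0,1/4)$.

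For $q=2$ even the conclusion fails. The identity $(p^2-1/q)f_q=\frac{q-1}{q}(f_q-f_{q+2})$ gives, for $T=2$, $\E[(p_1^2-\tfrac12)\one_{\{p_1+p_2\ge b\}}]=\tfrac12[\Prob(p_1+p_2\ge b)-\Prob(p_1'+p_2\ge b)]$, with $p_1'\sim f_4$ independent. Both tails equal $1/2$ at $b=0$. The $b$-derivative of the bracket is $(f_4*f_2)(b)-(f_2*f_2)(b)$, which tends to $-\infty$ as $b\downarrow0$ because $f_2*f_2$ is logarithmically singular at $0$. Hence $A_{\beta,2}<1/2$ for small $\beta$, and conditioning \emph{raises} $\lambda_\perp$ above $1/(qT)$. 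So showing that each effect separately moves in the right direction cannot work; $u_{\beta,T}$ and $A_{\beta,T}$ must be compared jointly.

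\textbf{The longitudinal comparison.} For $\lambda_0>\lambda_L$, your reduction to $C>0$, and then to monotonicity of $g(c)=\E[\sum_t\delta_t^2\mid c]$ on $c>0$, is valid conditionally. But that lemma is the whole difficulty, it is left unproved, and the reasons you offer do not establish it.
\begin{itemize}
\item The slices $\{\sum_tp_t=Tc\}\cap[-1,1]^T$ are not nested as $c$ grows, because the constraints $\delta_t\ge-1-c$ loosen. For $T=3$, $c=1/3$ admits $\delta=(2/3,2/3,-4/3)$, with $\sum_t\delta_t^2=8/3$, above the maximum $2$ at $c=0$.
\item Log-concavity alone does not give the monotonicity. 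For the symmetric log-concave Laplace density and $T=2$, the conditional law of $\delta$ at level $c$ is flat on $[-c,c]$ with exponential tails, so its spread grows with $c$.
\end{itemize}

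\textbf{The missing idea, as in the paper.} The sphere-coordinate identities $(1-p^2)f_q=\frac{q-1}{q}f_{q+2}$ and $pf_q=-\frac1q f_{q+2}'$ give exact gap formulas. Put $b=T\beta$, $P=\Prob(\sum_tp_t\ge b)$, $d_1=f_{q+2}*f_q^{*(T-1)}$ and $d_2=f_{q+2}*f_{q+2}*f_q^{*(T-2)}$.
\begin{itemize}
\item Integrating by parts against $\phi(s)=s\one_{\{s\ge b\}}$ gives $\lambda_0-\lambda_\perp=\beta d_1(b)/(qP)$, which is visibly positive.
\item The longitudinal gap is $\lambda_0-\lambda_L=C=-d_2'(b)/(q^2P)$. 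Positivity reduces to $d_2'<0$ on $(0,T)$. This follows by writing even decreasing densities as mixtures of interval indicators. The case $q=2$ is handled separately by regrouping into $f_2*f_2$ and $f_4*f_2$, both strictly decreasing on $(0,2)$.
\end{itemize}
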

\begin{proof}
Write $C=\E_\beta p_1p_2$. The longitudinal block has diagonal $A_{\beta,T}/T$ and off-diagonal $C/T$, while
$u_{\beta,T}=[A_{\beta,T}+(T-1)C]/T$. Its constant and contrast eigenvectors give the first two levels. Transverse rotations remove cross terms and make each datewise transverse block scalar; its trace is $(1-A_{\beta,T})/T$. This gives the third level. The same rotations and date permutations give the stated mean. Since $\beta^2\le u_{\beta,T}\le A_{\beta,T}\le1$,
\[
 \lambda_L\le\frac{1-\beta^2}{T-1},\qquad
 \lambda_\perp\le\frac{1-\beta^2}{T(q-1)}\le\frac{1-\beta^2}{T-1}.
\]
Subtracting from $\lambda_0\ge\beta^2$ proves \eqref{eq:marginfinitegap}. Strict dominance for all positive margins, including when this bound is uninformative, is proved by the exact gap identities in \cref{app:finitegap}. The argument also covers $T=1$, when only the transverse comparison is present.
\end{proof}

For $T=1$, the contrast space is absent: $\lambda_0=A_{\beta,1}=\ell_\beta$ and $\lambda_\perp=(1-\ell_\beta)/(q-1)$, the cap formula \eqref{eq:capmoments}. The theorem closes the finite-history question: a positive margin selects the leading target direction without any minimum date count. The bound \eqref{eq:marginfinitegap} controls the size of a gap; it does not locate its onset.

The limiting constants can be computed with one-dimensional integrals. Let $p$ be one coordinate of a uniform point on $\Sphere^{q-1}$ and define
\begin{equation}\label{eq:tilt}
 Z_q(\kappa)=\E e^{\kappa p},\quad
 \frac{d}{d\kappa}\log Z_q(\kappa)=\beta,\quad
 dF_\kappa(p)=\frac{e^{\kappa p}}{Z_q(\kappa)}dF(p).
\end{equation}
There is a unique solution $\kappa=\kappa(q,\beta)>0$, because the tilted mean increases continuously from zero to one. Put $A_\kappa=\E_\kappa p^2$ and $v_\kappa=A_\kappa-\beta^2>0$.

\begin{corollary}[A fixed margin survives a growing history]\label{cor:marginlimit}
For fixed $q\ge2$ and $0<\beta<1$, as $T\to\infty$,
\begin{equation}\label{eq:marginlimit}
\begin{aligned}
 m_{\beta,T}&\longrightarrow\beta,&
 \lambda_0&\longrightarrow\beta^2,\\
 T\lambda_L&\longrightarrow v_\kappa,&
 T\lambda_\perp&\longrightarrow\frac{1-A_\kappa}{q-1}=\frac{\beta}{\kappa}.
\end{aligned}
\end{equation}
Consequently the EWS norm tends to $\beta$, its direction is the target history, and the leading gap tends to $\beta^2$. The two residual coefficients generally differ, although they agree to leading order as $\beta\downarrow0$; \cref{app:refinements} gives the expansion.
\end{corollary}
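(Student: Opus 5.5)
Everything reduces, through \cref{thm:margin}, to the limits of the three scalars $m_{\beta,T}$, $u_{\beta,T}$ and $A_{\beta,T}$ under the conditional law $c=T^{-1}\sum_t p_t\ge\beta$, where the $p_t$ are iid copies of one coordinate $p$ of a uniform point on $\Sphere^{q-1}$. Since $0<\beta<1$ lies strictly inside the support of $p$ and $\beta>\E p=0$, this is a large-deviation conditioning. I would invoke the Gibbs conditioning principle with the exponential tilt \eqref{eq:tilt}. Under the event $\{c\ge\beta\}$, the empirical law of the $p_t$ concentrates on $F_\kappa$, and $c$ concentrates at $\beta$.

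Concretely, I would change measure to the product of tilted laws $F_\kappa^{\otimes T}$. Under $F_\kappa$ the mean is exactly $\beta$ and the variance is $v_\kappa=A_\kappa-\beta^2$. The conditional law of $(p_t)$ given $c\ge\beta$ under the uniform product equals its conditional law under the tilted product, because the density ratio $e^{\kappa Tc}/Z_q(\kappa)^T$ depends only on $c$. This reweighting gives $\E_\beta[g]=\E_\kappa[g\,e^{-\kappa T(c-\beta)};c\ge\beta]/\E_\kappa[e^{-\kappa T(c-\beta)};c\ge\beta]$. Under the tilted law, $\sqrt T(c-\beta)/\sqrt{v_\kappa}\Rightarrow N(0,1)$. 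The weight $e^{-\kappa T(c-\beta)}$ on $\{c\ge\beta\}$ therefore localizes $c-\beta$ on scale $1/T$, which is a Bahadur--Rao type localization, with $p$ having a density so the lattice issue is absent. This gives $m_{\beta,T}\to\beta$ and $u_{\beta,T}\to\beta^2$, since $c$ is bounded and $0\le c-\beta=O_P(1/T)$. Then $\lambda_0=u_{\beta,T}\to\beta^2$, and with the mean $m_{\beta,T}n_T$ from \cref{thm:margin} the EWS norm tends to $\beta$.

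For the residual levels, the transverse one is the easy step: $T\lambda_\perp=(1-A_{\beta,T})/(q-1)$, so I need only $A_{\beta,T}=\E_\beta p_1^2\to A_\kappa$. This is the one-marginal Gibbs statement: $p_1$ is asymptotically independent of the remaining sum, which after a shift by $O(1)$ still satisfies the same local conditioning, so the conditional law of $p_1$ converges to $F_\kappa$. I would prove it from the tilted representation by conditioning on $p_1$ and applying a uniform local limit theorem to $\sum_{t\ge2}p_t$. The identity $(1-A_\kappa)/(q-1)=\beta/\kappa$ follows from the ODE for $Z_q$. The density of $p$ is proportional to $(1-p^2)^{(q-3)/2}$, and integrating by parts gives $\E[(1-p^2)g'(p)]=(q-1)\E[p\,g(p)]$. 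Taking $g(p)=e^{\kappa p}$ yields $\kappa\E_\kappa(1-p^2)=(q-1)\beta$.

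The main obstacle is the longitudinal level $T\lambda_L=T(A_{\beta,T}-u_{\beta,T})/(T-1)$, which requires the $O(1)$ difference $A-u$ exactly in the limit rather than just $o(1)$ control. Since $u=[A+(T-1)C]/T$, we have $A-u=\frac{T-1}{T}(A-C)$ and so $T\lambda_L=A_{\beta,T}-C$. It therefore suffices that $C=\E_\beta p_1p_2\to\beta^2$, meaning two-coordinate asymptotic independence under the tilted law. I would run the same local-limit argument with two coordinates fixed, obtaining the joint conditional law $\to F_\kappa^{\otimes2}$ and hence $C\to(\E_\kappa p)^2=\beta^2$. This yields $T\lambda_L\to A_\kappa-\beta^2=v_\kappa$. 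The gap statement then follows because both residual levels are $O(1/T)$, so $\lambda_1-\lambda_2\to\beta^2$, and $n_T$ remains the leading direction by \cref{thm:margin}.
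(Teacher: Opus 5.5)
Your proposal is correct, but it proves the two key inputs by a different route from the paper. The paper (\cref{app:tilting}) uses no local limit theorem here. It obtains $c\to\beta$ in conditional probability from the log-tail asymptotic, $\Prob(c\ge\beta+\epsilon\mid c\ge\beta)=\exp\{-T[I(\beta+\epsilon)-I(\beta)+o(1)]\}$. It obtains $A_{\beta,T}\to A_\kappa$ from the exact identity $D(P_{\beta,T}\Vert F_\kappa^{\otimes T})=-\log\Prob(c\ge\beta)-\kappa Tm_{\beta,T}+T\Lambda(\kappa)=o(T)$: decomposing this entropy against the product of the one-date marginals, which are equal by exchangeability, forces $D(F^{(1)}_{\beta,T}\Vert F_\kappa)\to0$, and Pinsker finishes. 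That route needs only exponential Markov, the tilted law of large numbers and exchangeability, and no regularity of the coordinate density.

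Your route reweights by $e^{-\kappa T(c-\beta)}$ and applies a uniform density local limit theorem to $\sum_{t\ge2}p_t$, uniformly over the shift $p_1\in[-1,1]$. It does work, but it needs the local-limit hypotheses. For $q=2$ the tilted density is unbounded at $\pm1$, so you need a convolution power with integrable characteristic function, as checked in \cref{app:refinements}. In exchange it localizes $c-\beta$ at scale $1/T$, which yields \eqref{eq:finitecorrection} and the exponential limit of $T(c-\beta)$. The paper proves those separately with exactly this machinery. Your derivation of $(1-A_\kappa)/(q-1)=\beta/\kappa$ is the paper's.

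Some of your effort is misplaced.
\begin{itemize}
\item The longitudinal level is not an obstacle. $C=(Tu_{\beta,T}-A_{\beta,T})/(T-1)\to\beta^2$ follows at once from $u_{\beta,T}\to\beta^2$ and $0\le A_{\beta,T}\le1$. Equivalently, $T\lambda_L=\frac{T}{T-1}(A_{\beta,T}-u_{\beta,T})\to A_\kappa-\beta^2$, which is how the paper reads it off \eqref{eq:marginlevels}. Your two-coordinate local limit argument is redundant.
\item For $m_{\beta,T}$ and $u_{\beta,T}$ you only need $c\to\beta$ in conditional probability. The central limit theorem already gives this once you bound the denominator below by $e^{-\kappa T\delta}\Prob_\kappa(0\le c-\beta\le\delta)$ with $\delta<\epsilon$. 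Your stronger claim $c-\beta=O_P(1/T)$ does not follow from the CLT sentence you wrote; it needs the local limit theorem you invoke only later.
\item One sentence is wrong as stated. The conditional laws given the event $\{c\ge\beta\}$ under the uniform and tilted products are not equal; only the laws given the value of $c$ coincide. Your displayed formula, which keeps the weight $e^{-\kappa T(c-\beta)}$, is the correct statement.
\item You do not address the closing remark on the two residual coefficients. The paper obtains it from the expansions $v_\kappa=1/q-3\beta^2/(q+2)+O(\beta^4)$ and $\beta/\kappa=1/q-\beta^2/(q+2)+O(\beta^4)$ in \cref{app:refinements}.
\end{itemize}
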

\begin{proof}
The exponential-tilting argument in \cref{app:tilting} proves $c\to\beta$ in conditional probability and $A_{\beta,T}\to A_\kappa$. Boundedness gives $m_{\beta,T}\to\beta$ and $u_{\beta,T}\to\beta^2$, and \eqref{eq:marginlevels} gives the spectral limits. Integrating the derivative of $e^{\kappa p}(1-p^2)^{(q-1)/2}$ over $[-1,1]$ gives $\kappa(1-A_\kappa)=(q-1)\beta$.
\end{proof}

The finite-history correction can also be made precise. For fixed $q$ and $0<\beta<1$,
\begin{equation}\label{eq:finitecorrection}
 m_{\beta,T}=\beta+\frac{1}{\kappa T}+o(T^{-1}),\qquad
 \lambda_0=\beta^2+\frac{2\beta}{\kappa T}+o(T^{-1}).
\end{equation}
Under the retained law, $T(c-\beta)$ tends to an exponential variable of rate $\kappa$, with convergence of its first two moments. \Cref{app:refinements} proves this correction and the associated admission prefactor. It is a fixed-margin result, not a uniform approximation as $\beta$ shrinks.

Concretely, the margin-screened population has a one-date IC marginal that tends to the exponentially tilted law $F_\kappa$, whose mean is $\beta$. This is an asymptotic description of a retained signal at one date, not an assertion that its conditioned dates are independent at finite $T$.

For each fixed $T$, near-maximum small-angle packing in $E_{\beta,T}$ has these conditional moments in the packing limit by \cref{thm:uniform}. Taking $T\to\infty$ \emph{after} that limit gives \cref{cor:marginlimit}. A joint sequence needs its own error control: if its mean and second moment differ from the conditional benchmark by $o(\beta)$ and $o(\beta^2)$ respectively, fixed positive $\beta$ and \cref{cor:marginlimit} give the same limiting direction and gap. We make no uniform packing claim when $\beta$ shrinks.

\begin{figure}[tb]
\centering
\includegraphics[width=0.96\textwidth]{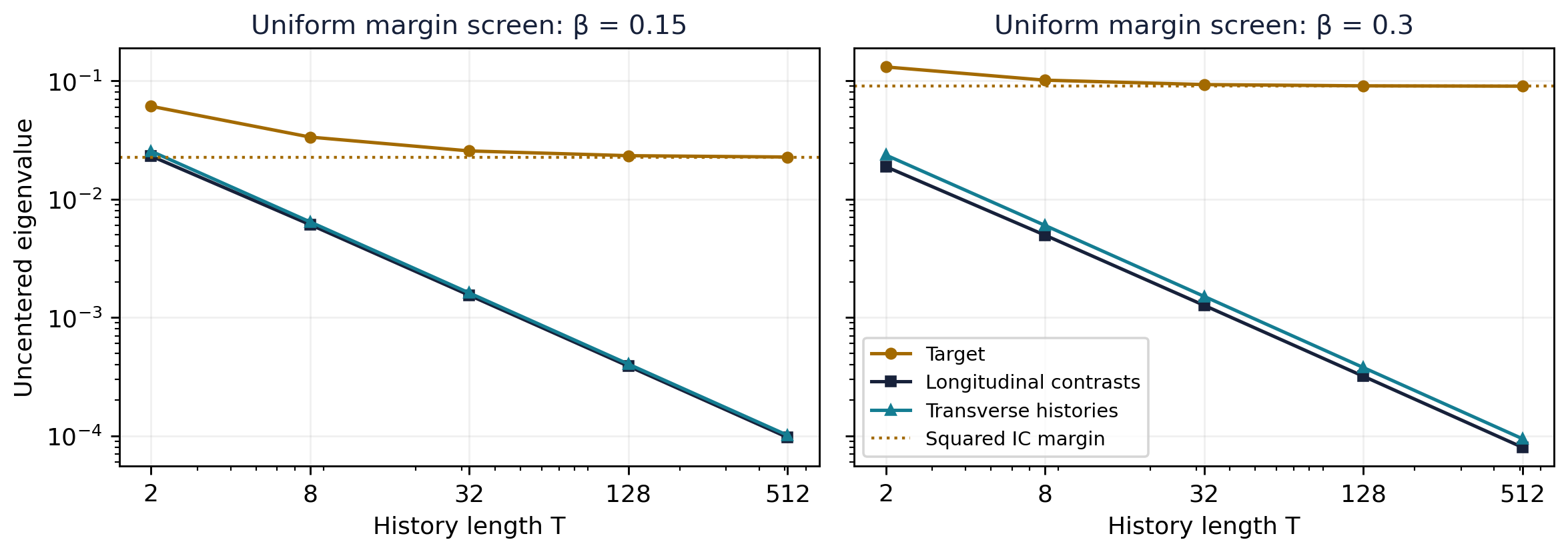}
\caption{The three margin-screened spectral levels for $q=19$, estimated by importance sampling from the conditional benchmark (40,000 proposals per point). The target level approaches $\beta^2$ while both residual levels decay. \Cref{thm:margin} proves target dominance for every $T$ and positive margin; the coarse bound \eqref{eq:marginfinitegap} becomes informative at $T\ge45$ for $\beta=0.15$. Equation~\eqref{eq:finitecorrection} explains the approach to $\beta^2$. The source bundle reports effective sample sizes and separate exact quadrature checks.}
\label{fig:margin}
\end{figure}

The uniform candidate population pays for a fixed margin through rarity: \cref{app:tilting} gives admission probability $\exp\{-T I(\beta)+o(T)\}$ with $I(\beta)>0$. Thus the fixed-margin event becomes exponentially rare under uniform candidates.

The result is useful precisely because it states what must be modeled: a persistent IC margin \emph{and} the distribution of residual structure. An arbitrary margin-screened corpus need not have the uniform conditional law. The orthogonal-PC1 construction in \cref{thm:orthogonalcode} has a fixed positive margin and still disagrees.

\subsection{Small margins: spectral strength and admission rate}\label{sec:smallmargin}
Every positive margin already gives a leading target by \cref{thm:margin}. The remaining questions are how strong that direction is and how many candidates survive the screen. The fixed-margin residual coefficients both approach $1/q$ as $\beta\downarrow0$, suggesting the scale $qT\beta^2$. \Cref{app:refinements} records their expansions. The following joint limit treats a shrinking margin directly.

\begin{proposition}[A margin at the central-limit scale]\label{prop:localscreen}
Fix $q\ge2$ and $z\ge0$, put $\beta_T=z/\sqrt{qT}$, and use uniform product volume conditioned on $c\ge\beta_T$. Let $\varphi$ be the standard normal density, $\overline\Phi(z)=\int_z^\infty\varphi(y)\,dy$, and $h(z)=\varphi(z)/\overline\Phi(z)$. As $T\to\infty$,
\begin{equation}\label{eq:localscreen}
\begin{aligned}
 \Prob(c\ge\beta_T)&\longrightarrow\overline\Phi(z),&
 \sqrt{qT}\,m_{\beta_T,T}&\longrightarrow h(z),\\
 qT\lambda_0&\longrightarrow1+zh(z),&
 qT\lambda_L, qT\lambda_\perp&\longrightarrow1.
\end{aligned}
\end{equation}
For every fixed $z>0$, the target is the unique PC1 for every $T$ with $qT>z^2$, that is, whenever $0<\beta_T<1$, and $qT(\lambda_1-\lambda_2)\to zh(z)>0$. Its absolute gap and mean norm still vanish. At $z=0$, the positive-half second moment is exactly isotropic for every $T$.
\end{proposition}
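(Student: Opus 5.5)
The plan is to reduce everything to the standardized average $Y_T=\sqrt{qT}\,c=\sqrt{q/T}\sum_t p_t$, which has mean zero, variance one, and converges in law to $N(0,1)$ by the central limit theorem, as in the proof of \cref{thm:positivehalf}. The screen $c\ge\beta_T$ is exactly the event $Y_T\ge z$. The normal limit has a continuous distribution function, so $\Prob(Y_T\ge z)\to\overline\Phi(z)$, and this is positive for every fixed $z$. For the conditional moments I need uniform integrability of $Y_T$ and $Y_T^2$ on that event. The variables $Y_T$ are bounded in $L^4$: the $p_t$ are bounded iid with mean zero, so $\E Y_T^4=3(q\E p^2)^2+O(1/T)$. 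Hence $Y_T^2$ is uniformly integrable. Because the set $\{y\ge z\}$ has a boundary of zero Gaussian mass, I get $\E[Y_T\one\{Y_T\ge z\}]\to\varphi(z)$ and $\E[Y_T^2\one\{Y_T\ge z\}]\to\overline\Phi(z)+z\varphi(z)$. Dividing by the admission probabilities gives $\sqrt{qT}\,m_{\beta_T,T}\to h(z)$ and $qT\,u_{\beta_T,T}\to1+zh(z)$. Since $\lambda_0=u_{\beta,T}$ by \cref{thm:margin}, this proves the first three limits in \eqref{eq:localscreen}.

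For the residual levels I would use \eqref{eq:marginlevels} and compute $A_{\beta_T,T}=\E[p_1^2\mid Y_T\ge z]$.
\begin{itemize}
\item \emph{Transverse level.} The claim $qT\lambda_\perp\to1$ amounts to $A_{\beta_T,T}\to1/q$. Write $Y_T=\sqrt{q/T}\,p_1+Y_T'$, where $Y'_T$ is independent of $p_1$ and converges to $N(0,1)$. The perturbation $\sqrt{q/T}\,p_1$ is $O(T^{-1/2})$, so conditioning on $\{Y_T\ge z\}$ changes the law of $p_1$ by $o(1)$. Concretely, $\E[p_1^2\one\{Y_T\ge z\}]\to\E p_1^2\,\overline\Phi(z)$ by dominated convergence, using the continuity of the limit law at $z$. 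Then $A\to1/q$, and $T(q-1)\lambda_\perp\cdot q/(q-1)=qT(1-A)/(q-1)\to1$.
\item \emph{Longitudinal level.} This one is delicate, because $\lambda_L=(A-u)/(T-1)$ and I need $qT\lambda_L\to1$. That requires $qA-qu\cdot T/(T-1)$ to tend to one, i.e.\ $A\to1/q$ together with $u=O(1/T)$. Both are already in hand: $qT\lambda_L=qT(A-u)/(T-1)=qA\cdot T/(T-1)-qTu/(T-1)\to1-0$, since $qTu\to1+zh(z)$ stays bounded. So this step is easier than it looks.
\end{itemize}
The gap statement then follows: $qT(\lambda_1-\lambda_2)\to zh(z)$, which is positive for $z>0$.

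Three remaining items close the argument.
\begin{itemize}
\item \emph{Exact dominance.} The unique-PC1 claim for every $T$ with $qT>z^2$ needs no asymptotics. That condition is exactly $0<\beta_T<1$, so \cref{thm:margin} applies directly with $\beta=\beta_T$.
\item \emph{Vanishing size.} The absolute gap and the mean norm are $O(1/T)$ and $O(T^{-1/2})$ respectively.
\item \emph{The case $z=0$.} The screen is $c\ge0$, which coincides with $c>0$ up to a null set. \Cref{prop:antipodal} and \cref{thm:positivehalf} then give the exactly isotropic second moment $I/(qT)$.
\end{itemize}

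The main obstacle is the uniform integrability and conditioning step in the first paragraph, and specifically the convergence of $\E[Y_T^2\one\{Y_T\ge z\}]$. I would handle it through the fourth-moment bound together with the portmanteau theorem applied to $y^2\one\{y\ge z\}$, a function whose discontinuity set has Gaussian measure zero. Truncating the unbounded integrand and letting the truncation level grow completes that step.
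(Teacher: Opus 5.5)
Your proposal is correct and follows the paper's own proof essentially step for step. The shared steps are: the central limit theorem for $Y_T=\sqrt{qT}\,c$; the fourth-moment bound, which gives uniform integrability of $Y_T^2$, so that truncation plus weak convergence yields the conditional moments $h(z)$ and $1+zh(z)$; the leave-one-date-out split $Y_T=\sqrt{q/T}\,p_1+Y_T'$ with bounded convergence, giving $A_{\beta_T,T}\to1/q$; and substitution into \eqref{eq:marginlevels}.

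One small slip in the transverse bullet: the expression should read $qT\lambda_\perp=q(1-A_{\beta_T,T})/(q-1)\to1$, without the extra factor $T$.
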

\begin{proof}
The central limit theorem applied to $Y_T=\sqrt{qT}\,c$ gives the admission limit. Uniform fourth-moment bounds justify convergence of its first two conditional moments, while removing one date shows $A_{\beta_T,T}\to1/q$. The three-level formulas then give \eqref{eq:localscreen}; \cref{app:localscreen} supplies the details.
\end{proof}

At $z=0$, $h(0)=\sqrt{2/\pi}$ recovers the mean scale of \cref{thm:positivehalf}; the second moment is exactly isotropic at every $T$. For fixed $z>0$, both residual levels are $1/(qT)+o(T^{-1})$. This leading-order equality does not identify which is $\lambda_2$. The fixed-margin expansions in \cref{app:refinements} do not supply that higher-order comparison in this joint limit; use the exact maximum in \eqref{eq:marginlevels}.

Thus $qT\beta_T^2=z^2$ controls relative spectral strength and selectivity, rather than the onset of PC1. At $z=1$, the limiting admission rate is $15.87\%$ and the target-to-residual eigenvalue ratio is $2.525$. A unique leading target does not require discarding almost all candidates. A \emph{persistent absolute gap} under a fixed positive margin is the more selective regime described by \cref{cor:marginlimit}.

Admission probabilities need a tail prefactor as well as an exponential rate. \Cref{app:refinements} gives the formula and the $q=19$, $\beta=0.1$, $T=100$ example; \cref{tab:weakmargin} keeps the weak-IC benchmark here.

\begin{table}[tb]
\centering\small
\caption{Uniform margin screen over 20 assets: $q=19$, $\beta=0.03$. The first row uses cap quadrature; the remaining rows use refined tilted convolution. At $T=1$ the residual comparison uses $\lambda_\perp$ alone.}
\label{tab:weakmargin}
\begin{tabular}{rrrr}
\toprule
$T$ & Admission probability & $\lambda_0/\max(\lambda_L,\lambda_\perp)$ & Gap \\
\midrule
1 & 0.4500 & 1.117 & $0.00613$ \\
2 & 0.4280 & 1.171 & $0.00449$ \\
10 & 0.3401 & 1.447 & $0.00235$ \\
60 & 0.1556 & 2.557 & $0.00136$ \\
100 & 0.0955 & 3.326 & $0.00122$ \\
1,000 & $1.77\times10^{-5}$ & 19.022 & $0.000948$ \\
\bottomrule
\end{tabular}

\end{table}
\Cref{tab:weakmargin} shows how conservative \eqref{eq:marginfinitegap} can be. Its lower bound becomes positive only at $T\ge1{,}112$, whereas \cref{thm:margin} proves a unique leading target at every $T$. At $T=60$, near the scale $qT\beta^2\approx1$, the target level is about $2.56$ times the largest residual level and roughly $15.6\%$ of the uniform pool passes.

\section{Finite packing and finite stopping}\label{sec:finite}

For nonempty compact $E\subset\Sphere^{D-1}$ let $\calP_E(\theta)$ be the maximum size of a subset with pairwise angles at least $\theta$. A code is \emph{saturated in $E$} if no point of $E$ can be added. Saturation is always relative to the admissible domain: exhausting one generator's candidates does not imply saturation in all of $\calM_{q,T}$.

\begin{proposition}[Finite capacity and coverage]\label{prop:coverage}
For fixed nonempty compact $E$ and $0<\theta\le\pi$, $\calP_E(\theta)<\infty$ and a maximum code exists. Every saturated code has covering radius less than $\theta$ in angular distance. On the full sphere,
\begin{equation}\label{eq:volumebounds}
 \frac1{v_D(\theta)}\le\calP_{\Sphere^{D-1}}(\theta)
 \le\frac1{v_D(\theta/2)},\qquad
 v_D(r)=\frac{\int_0^r\sin^{D-2}u\,du}{\int_0^\pi\sin^{D-2}u\,du}.
\end{equation}
\end{proposition}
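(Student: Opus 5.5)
The plan is to prove the three claims of \cref{prop:coverage} in order: finiteness and existence of a maximum code, the covering property of saturated codes, and the volume bounds.

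\emph{Finiteness and existence.} I will use compactness. Cover $E$ by finitely many open angular balls of radius $\theta/2$. Two points of one such ball are at angle strictly less than $\theta$, so a $\theta$-separated set meets each ball at most once. Its size is therefore bounded by the number of balls, and $\calP_E(\theta)<\infty$. The sizes of admissible codes form a nonempty set of integers: any single point of the nonempty $E$ is a code. That set is bounded, so its supremum is attained by some code, which is a maximum code. No limiting argument over configurations is needed, because we maximize over a bounded set of integers.

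\emph{Covering radius.} Let $\calC$ be saturated in $E$ and take any $y\in E$. If every point of $\calC$ were at angle at least $\theta$ from $y$, then $\calC\cup\{y\}$ would still be $\theta$-separated, contradicting saturation. (If $y\in\calC$, its distance to $\calC$ is zero.) Hence $\min_i\vartheta(y,x_i)<\theta$ for each $y$, and $\calC$ is finite, so the minimum is attained. I then take the supremum over $y\in E$. The distance to $\calC$ is a continuous function on compact $E$, so the supremum is attained, and it is strictly less than $\theta$. This strictness is the one step that needs care; it uses both finiteness of $\calC$ and compactness of $E$.

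\emph{Volume bounds on $\Sphere^{D-1}$.} Let $\sigma$ be normalized surface measure. Integrating in the polar angle from a centre gives the cap measure $v_D(r)=\sigma(\text{cap of angular radius } r)$, using the density proportional to $\sin^{D-2}u$.
\begin{itemize}
\item \emph{Upper bound.} For a $\theta$-separated code, the open caps of radius $\theta/2$ around its points are pairwise disjoint, by the triangle inequality for geodesic angle. Their total measure is $K\,v_D(\theta/2)\le1$.
\item \emph{Lower bound.} Take a maximum code, which is in particular saturated. By the covering step, the closed caps of radius $\theta$ around its points cover the sphere. Subadditivity then gives $K\,v_D(\theta)\ge1$.
\end{itemize}
Together these give \eqref{eq:volumebounds}.

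The edge case $\theta=\pi$ deserves a check. Here $v_D(\pi/2)=1/2$, so the upper bound is $2$, which is attained by an antipodal pair. Also $v_D(\pi)=1$, so the lower bound is $1$. I expect no step to be a genuine obstacle. The only delicate points are the strict inequality in the covering radius and keeping open versus closed caps consistent with the one-sided separation rule $c_{ij}\le\cos\theta$.
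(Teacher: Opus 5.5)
Your proposal is correct and follows the paper's proof essentially step for step: a finite cover by balls of diameter below $\theta$ bounds every separated set, saturation plus compactness gives a covering radius strictly below $\theta$, disjoint caps of radius $\theta/2$ give the upper bound, and covering caps of radius $\theta$ give the lower bound. Your existence argument for a maximum code, that a bounded nonempty set of integers attains its supremum, is slightly more direct than the paper's appeal to compactness of the separation constraints in a finite Cartesian power of $E$.
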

\begin{proof}
A finite cover of $E$ by balls of diameter smaller than $\theta$ bounds the number of separated points. For each feasible cardinality, the closed separation constraints define a compact subset of a finite Cartesian power of $E$; hence the largest feasible cardinality is attained. If a point were at distance at least $\theta$ from every code point, it could be added. Compactness then makes the covering radius strictly smaller than $\theta$. For the upper bound, caps of radius $\theta/2$ have disjoint interiors. For the lower bound, the caps of radius $\theta$ around a saturated code cover the sphere.
\end{proof}

For the 20-asset case, $\Sphere^{18}$ is a sphere in 19 dimensions, and $\calP_{\Sphere^{18}}(\pi/3)$ is the \emph{kissing number in dimension 19}. Ho's construction \cite[Theorem 1, v2]{ho} gives
\[
 \calP_{\Sphere^{18}}(\pi/3)\ge11{,}948.
\]
Leijenhorst and de Laat report the numerical semidefinite upper-bound value $24{,}417.472$, hence the integer value $24{,}417$ \cite[Table 1]{delaat}. Their paper states that interval-arithmetic certification was not performed, so we retain this as a reported numerical upper bound. Cohn and Li's updated table also records $11{,}948$ and $24{,}417$ \cite[Table 1.1, v3]{cohnli}, and Cohn's online table of kissing-number bounds \cite{cohntable} aggregates the current values in one place. The arXiv versions and published source were checked on 11 September 2026. These figures concern one-sided correlations at a single date; the exact kissing number remains undetermined.

When $K$ stops at a finite maximum, there is no $K\to\infty$ theorem to invoke. One can still compute the exact mean, spectrum, row sums, and residual bounds below. Repeating draws from that fixed code converges to its chosen weighting, but does not enlarge the separated set. Even a maximum configuration may have multiple equally good realizations, and its finite moments must be determined rather than inferred from cardinality alone.

\section{What shrinking the angle does on a fixed sphere}\label{sec:smallangle}

\subsection{Separation alone does not select a distribution}
Let $\theta_j\downarrow0$ and $K_j\to\infty$. Permitting points to lie closer together does not require them to visit the whole sphere. They can remain in a cap, along a curve, or in shrinking neighborhoods of selected directions. More generally, any probability measure on a sphere can be approximated by equal-weight measures on distinct finite point sets: approximate its masses by rational weights, then replace repeated atoms by nearby distinct points. Choosing a positive separation threshold below each set's minimum distance produces a separated sequence with that prescribed weak limit. Thus no limiting distribution follows from $K_j\to\infty$ and $\theta_j\to0$ alone.

\subsection{Saturation gives coverage, but equal weights can remain biased}
By \cref{prop:coverage}, saturated codes with $\theta_j\to0$ approach the sphere in Hausdorff distance. This establishes the geometric sense in which their support fills it. It does not establish uniform counting density.

\begin{example}[A saturated nonuniform sequence]\label{ex:saturated}
On the unit circle let $\theta_n=\pi/(3n)$. Place $3n$ points at angles $j\pi/(3n)$, $0\le j<3n$, and $2n$ points at angles $\pi+j\pi/(2n)$, $0\le j<2n$. All consecutive gaps are either $\theta_n$ or $3\theta_n/2$. The code is separated, and no gap is large enough to insert a point at distance $\theta_n$ from both neighbors. It is therefore saturated.

Its empirical measure tends to the mixture assigning mass $3/5$ uniformly to the upper semicircle and mass $2/5$ uniformly to the lower one. Direct integration gives
\begin{equation}\label{eq:circlelimit}
 \mu_n\longrightarrow\left(0,\frac{2}{5\pi}\right),\qquad
 M_n=\frac12 I_2\quad\text{for every }n.
\end{equation}
The code fills the circle, retains a nonzero mean, and has a degenerate limiting leading eigenspace. It contains $5n$ points, whereas the maximum at this threshold is $6n$.
\end{example}

\begin{figure}[tb]
\centering
\includegraphics[width=0.94\textwidth]{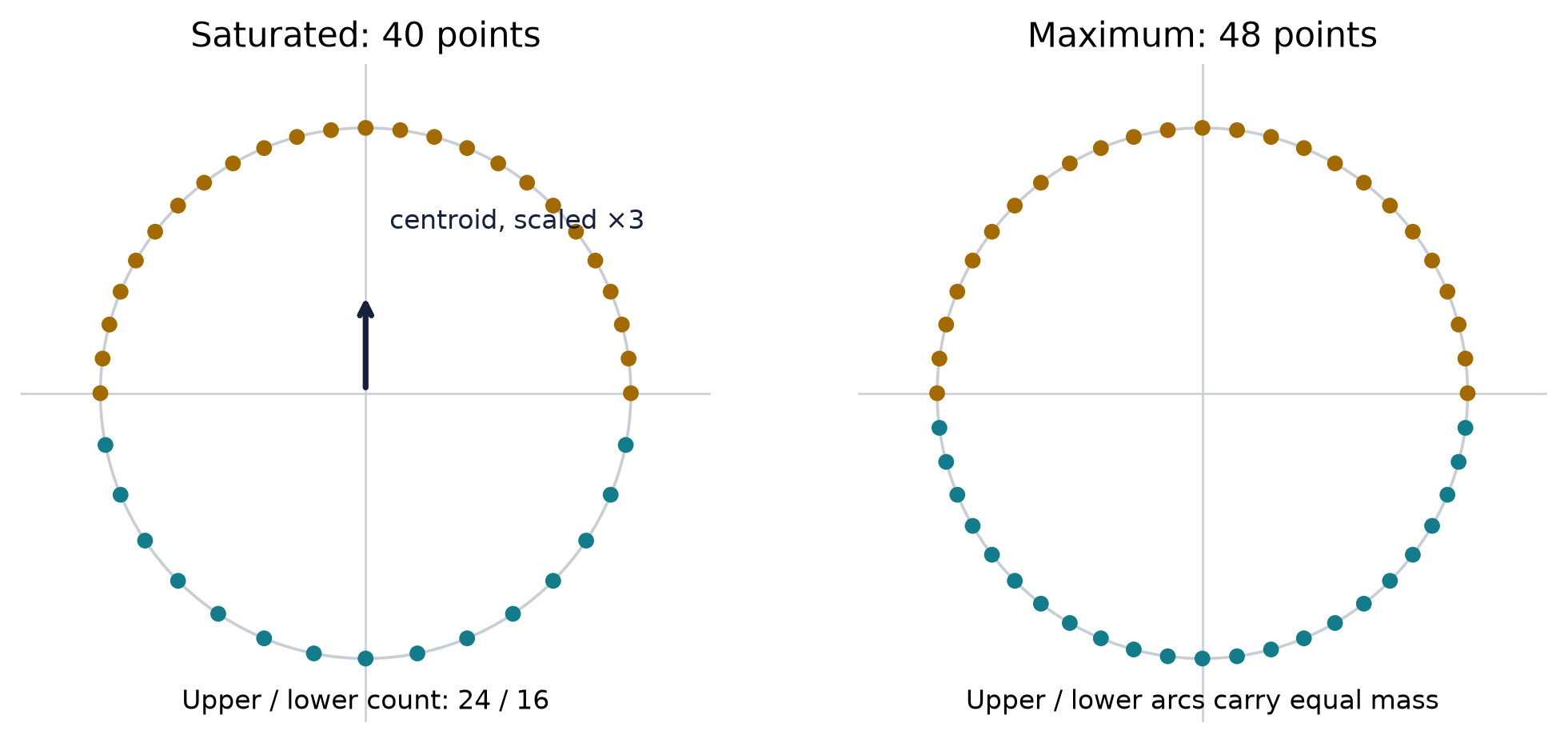}
\caption{Both configurations cover the circle increasingly closely as $n$ grows. The saturated construction keeps a $3:2$ count imbalance between semicircles. The maximum construction has equal gaps and a zero centroid. The arrow on the left marks the finite centroid; it is magnified only in length for visibility.}
\label{fig:saturation}
\end{figure}

The distinction can also be measured through Voronoi cells. Let $V_i$ be the cell assigned to point $x_i$ under nearest-neighbor allocation, with any measurable tie convention, and let $\sigma$ be uniform surface probability. Set $w_i=\sigma(V_i)$ and $\Delta_{\mathrm w}=\sum_i|K^{-1}-w_i|$. If the maximum chordal distance within the allocation is $r$, then
\begin{equation}\label{eq:voronoi}
 \norm{\mu}\le r+\Delta_{\mathrm w},\qquad
 \norm{M-I_D/D}_{\op}\le2r+\Delta_{\mathrm w}.
\end{equation}
Indeed, transporting every cell to its site changes the first moment by at most $r$ and the second by at most $2r$, using $\norm{x\otimes x-y\otimes y}_{\op}\le2\norm{x-y}$. Replacing the cell masses by equal weights costs at most $\Delta_{\mathrm w}$. Small covering radius alone controls only the first part of each error.

\subsection{Near-maximum cardinality does force uniformity}\label{sec:bhs}
We use compact subsets $E$ of a fixed compact $C^1$ embedded $m$-dimensional manifold, with $m\ge1$ and positive $m$-dimensional Hausdorff measure $\operatorname{vol}_m(E)$. Finite coordinate charts make these sets $m$-rectifiable in the Lipschitz-image sense. Compact rectifiable sets have Minkowski content equal to Hausdorff measure \cite[Theorem~3.2.39]{federer}, so they satisfy the precise hypothesis of Borodachov, Hardin, and Saff \cite[Eq.~(1.9) and Theorem 2.2]{bhs}. Inverting their best-packing-distance limit gives
\begin{equation}\label{eq:bhs}
 \calP_E^{\rm chord}(\varepsilon)
 =\kappa_m\,\operatorname{vol}_m(E)\varepsilon^{-m}(1+o(1)),
 \qquad \varepsilon\downarrow0,
\end{equation}
where $\kappa_m=2^m\Delta_m/\mathsf v_m$, $\Delta_m$ is optimal Euclidean ball-packing density, and $\mathsf v_m$ is unit-ball volume. A \emph{regular patch} below means a relatively open $B\subset E$ with $\operatorname{vol}_m(\partial_E B)=0$. No smoothness of that boundary is required; the closures of both patches are compact rectifiable sets.

\begin{theorem}[Near-maximum codes on a fixed domain]\label{thm:uniform}
Fix such an $E\subset\Sphere^{D-1}$, of positive volume and positive dimension. If $\theta_j\downarrow0$, codes $\calC_j\subset E$ are $\theta_j$-separated, and
\[
 \frac{|\calC_j|}{\calP_E(\theta_j)}\longrightarrow1,
\]
then their empirical probability measures converge weakly to normalized volume on $E$.
\end{theorem}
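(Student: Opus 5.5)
The plan is the standard ``asymptotic additivity'' argument for best packings, using \eqref{eq:bhs} on the regular patches defined above.

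First, convert angles to chords: $\varepsilon_j=2\sin(\theta_j/2)$, so $\calP_E(\theta_j)=\calP_E^{\rm chord}(\varepsilon_j)$. By \eqref{eq:bhs}, $\calP_E(\theta_j)=\kappa_m\operatorname{vol}_m(E)\varepsilon_j^{-m}(1+o(1))$, and likewise for the closure of any regular patch $B$ and of its complement $E\setminus \overline B$. The closures are compact rectifiable subsets of the same manifold. Weak convergence of the empirical measures $\nu_j$ to $\sigma_E=\operatorname{vol}_m(\cdot\cap E)/\operatorname{vol}_m(E)$ will follow from
\[
\nu_j(B)\to\sigma_E(B)\quad\text{for every regular patch } B .
\]
Regular patches form a convergence-determining class. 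For a point $x$ and all but countably many radii $r$, the relative ball $B(x,r)\cap E$ has $\sigma_E$-null boundary, and finite unions and intersections of such sets generate the topology. The portmanteau theorem then suffices; since $E$ is compact there is no tightness issue.

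Second, the upper bound. The points of $\calC_j$ in $\overline B$ form a $\theta_j$-separated code in $\overline B$. Hence
\[
|\calC_j\cap\overline B|\le \calP_{\overline B}(\theta_j)=\kappa_m\operatorname{vol}_m(B)\varepsilon_j^{-m}(1+o(1)).
\]
Dividing by $|\calC_j|=\calP_E(\theta_j)(1+o(1))$, which is where the near-maximum hypothesis enters, gives $\limsup\nu_j(\overline B)\le\sigma_E(B)$. Here $\operatorname{vol}_m(\partial_EB)=0$ lets us use $\operatorname{vol}_m(\overline B)=\operatorname{vol}_m(B)$. Applying the same bound to the complementary patch $B'=E\setminus\overline B$ gives $\limsup\nu_j(\overline{B'})\le\sigma_E(B')=1-\sigma_E(B)$.

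Third, combine the two bounds. Since $\overline B\cup\overline{B'}=E$, we have $\nu_j(\overline B)+\nu_j(\overline{B'})\ge1$. So
\[
\liminf\nu_j(B)\ge\liminf\nu_j(\overline B)\ge 1-\limsup\nu_j(\overline{B'})\ge\sigma_E(B),
\]
where the first step uses that points on the null boundary $\partial_EB$ are negligible. That claim itself needs justification: cover $\partial_EB$ by a closed neighborhood $N_\delta$ whose volume tends to $0$, and use the same upper bound on $N_\delta$ to get $\limsup\nu_j(N_\delta)\le \operatorname{vol}_m(N_\delta)/\operatorname{vol}_m(E)\to0$. Together with the upper bound this gives $\nu_j(B)\to\sigma_E(B)$.

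The main obstacle is this boundary/neighborhood step. The upper bound from \eqref{eq:bhs} requires the neighborhoods $N_\delta$ to be compact rectifiable sets of positive volume, which holds because they are closed subsets of the manifold, as the paper notes. One must also avoid any need for a matching lower bound on sub-patches, which \eqref{eq:bhs} supplies only for near-maximal codes \emph{within} the patch. The complement trick above circumvents that, since only upper bounds on subsets and the global near-maximality are used.
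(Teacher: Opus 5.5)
Your proposal is correct and is essentially the paper's argument: \eqref{eq:bhs} gives upper bounds on $\overline B$ and on a complementary closed set, these are divided by $|\calC_j|\sim\calP_E(\theta_j)$, and weak convergence then follows by the standard extension from null-boundary patches. The differences are minor: the paper applies the capacity bound directly to the closed set $E\setminus B$, which makes your separate boundary-neighborhood step unnecessary, and it states explicitly that zero-volume patches need $\calP^{\rm chord}(\varepsilon)\varepsilon^m\to0$, which the literal $(1+o(1))$ form you wrote does not express.
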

\begin{proof}
Put $\varepsilon_j=2\sin(\theta_j/2)$. For a regular patch $B\subset E$, the number of selected points in $B$ is at most $\calP_{\overline B}^{\rm chord}(\varepsilon_j)$. Divide by $|\calC_j|$ and use \eqref{eq:bhs} to obtain an upper limit of at most $\operatorname{vol}(B)/\operatorname{vol}(E)$. Apply the capacity bound to $E\setminus B$, whose volume is the complementary volume because $\operatorname{vol}_m(\partial_E B)=0$, to get the matching lower limit. A zero-volume patch has $\calP^{\rm chord}(\varepsilon)\varepsilon^m\to0$ by the same best-packing theorem. Partitions of arbitrarily small diameter with null boundaries can be formed from metric balls whose radii avoid the countably many positive-volume level sets. They approximate continuous functions on compact $E$, proving weak convergence.
\end{proof}

On the full fixed sphere this gives the direct answer:
\begin{equation}\label{eq:uniformsphere}
 \mu_j\to0,\qquad M_j\to I_D/D,\qquad
 \lambda_1(M_j)-\lambda_2(M_j)\to0.
\end{equation}
Thus the strongest version of filling the \emph{unrestricted} sphere produces neither a nonzero limiting EWS nor a unique population PC1. The intended positive-IC corpus instead uses the restricted domain in \cref{sec:positiveic}. The direction of a small finite centroid is not controlled by this result. Nor does a vanishing eigengap prevent a particular selected sequence from having convergent eigenvectors; it removes their unique identification from the limiting operator.

At fixed $q,T$, the same theorem applies to $\calM_{q,T}$. Its uniform volume measure is the product of the uniform datewise sphere measures. Consequently its mean is zero and its ambient second moment is $I_{qT}/(qT)$. In particular,
\[
 \calP_{\calM_{q,T}}(\theta)\asymp_{q,T}\theta^{-(q-1)T}
 \quad(\theta\downarrow0,\ q,T\text{ fixed}).
\]
The constants and the onset of this approximation can depend on $T$. It is not a uniform theorem for a growing number of dates.

\subsection{Restricting the admissible region changes the answer}
Consider a fixed spherical cap $E_b=\{x\in\Sphere^{q-1}:m^\top x\ge b\}$, with $0\le b<1$. Near-maximum small-angle codes in that cap converge to uniform cap measure by \cref{thm:uniform}. Write $a=m^\top x$. Rotational symmetry gives
\begin{equation}\label{eq:capmoments}
 \mu=\bar a_b m,\quad
 M=\ell_b\,m\otimes m+\frac{1-\ell_b}{q-1}(I-m\otimes m),\quad
 \ell_b=\frac{\int_b^1 a^2(1-a^2)^{(q-3)/2}\,da}
 {\int_b^1(1-a^2)^{(q-3)/2}\,da}.
\end{equation}
Here $\bar a_b>0$. For a hemisphere, $b=0$ and $\ell_0=1/q$: the mean is nonzero but the second moment remains isotropic. For $b>0$, conditioning on a larger absolute latitude strictly raises its squared value, so $\ell_b>1/q$ and the mean is the unique PC1 direction. As $b\downarrow0$, the gap tends to zero. If cap size and packing angle change together, moment errors must be controlled relative to that gap before claiming stable alignment.

The history-margin analogue is \cref{thm:margin,cor:marginlimit}, where the product constraint creates three spectral levels. This is how a target or a favored factor can enter through the admissible domain. For $T=1$, target-positive admission gives this hemisphere. For $T>1$, positive average IC uses the product-manifold history half in \cref{sec:positiveic}, whose measure is not uniform volume on the entire ambient hemisphere. Positive average IC alone does not require every datewise signal to lie in such a cap.

\subsection{Random arrivals and greedy admission}\label{sec:rsa}
Greedy acceptance from iid candidate arrivals, rejecting every point within the exclusion distance of an accepted point, is random sequential adsorption (RSA). Its output is not a generic saturated configuration: the arrival law and admission order induce a particular distribution over configurations.

\begin{proposition}[What rotation-invariant admission implies]\label{prop:rsa}
Start from an empty full sphere with iid uniform arrivals and a rotation-equivariant distance rule. For any nonempty accepted configuration obtained after a fixed number of arrivals, or at the infinite-input jammed limit, its normalized empirical measure $\widehat\nu_\theta$ satisfies
\[
 \E\widehat\nu_\theta=\sigma,\qquad
 \E\mu_\theta=0,\qquad \E M_\theta=I_D/D,
\]
where $\sigma$ is uniform sphere probability. If, additionally, $\Var(\int f\,d\widehat\nu_\theta)\to0$ as $\theta\to0$ for a countable uniformly dense family of continuous test functions, then $\widehat\nu_\theta\Rightarrow\sigma$ in probability, and its mean and second moment converge in probability to $0$ and $I_D/D$.
\end{proposition}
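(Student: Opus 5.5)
The plan rests on rotation equivariance. Let $R$ be a uniformly random rotation of $\Sphere^{D-1}$, independent of the arrival sequence $Y_1,Y_2,\ldots$. The rotated sequence $RY_1,RY_2,\ldots$ is again iid uniform. The distance rule is rotation-equivariant, so the greedy acceptance decisions for the rotated arrivals are exactly those for the original ones. Hence the accepted configuration built from $(RY_n)$ is $R$ applied to the configuration built from $(Y_n)$. This holds pathwise, both after a fixed number $N$ of arrivals and in the jammed limit; for the latter, the acceptance of each arrival is determined by the finitely many earlier accepted points. The first step is therefore the distributional identity $R_\#\widehat\nu_\theta\overset{d}{=}\widehat\nu_\theta$. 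On the nonempty event the normalization by $K$ is well defined; for the infinite-input limit I must also check that $K<\infty$ almost surely, which follows from \cref{prop:coverage} since $K\le\calP_{\Sphere^{D-1}}(\theta)$.

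For a continuous $f$, the identity gives
\[
\E\!\int f\,d\widehat\nu_\theta=\E\!\int f\circ R\,d\widehat\nu_\theta .
\]
Conditioning on the configuration and averaging over $R$ first turns $\int f(Rx)\,dR$ into $\int f\,d\sigma$ for every $x$, so $\E\widehat\nu_\theta=\sigma$. Taking the bounded coordinate functions $f(x)=x_k$ and $f(x)=x_kx_l$ then gives $\E\mu_\theta=\int x\,d\sigma=0$ and $\E M_\theta=\int x\otimes x\,d\sigma=I_D/D$. No interchange problems arise, because every integrand is bounded by one. The main point of care in this first part is the pathwise equivariance of the greedy map, in particular that ties or boundary cases in the distance rule transform correctly. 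Strict versus non-strict inequalities are preserved by isometries, so this should be immediate once the rule is written as a function of pairwise distances.

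For the second assertion, let $\{f_k\}$ be the countable uniformly dense family. Chebyshev's inequality, combined with $\E\int f_k\,d\widehat\nu_\theta=\int f_k\,d\sigma$, gives $\int f_k\,d\widehat\nu_\theta\to\int f_k\,d\sigma$ in probability for each $k$. Weak convergence in probability is metrized by
\[
d(\nu,\sigma)=\sum_k 2^{-k}\min\Bigl(1,\Bigl|\int f_k\,d\nu-\int f_k\,d\sigma\Bigr|\Bigr).
\]
This is a valid metric for weak convergence because uniform density of $\{f_k\}$ lets any continuous $f$ be approximated within $\varepsilon$ by some $f_k$, uniformly over probability measures. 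Dominated convergence then gives $\E\,d(\widehat\nu_\theta,\sigma)\to0$, hence $\widehat\nu_\theta\Rightarrow\sigma$ in probability.

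The moment convergence follows because $x\mapsto x$ and $x\mapsto x\otimes x$ are continuous on the sphere. Each of their finitely many coordinates is uniformly approximable by members of the family, so the corresponding integrals converge in probability. This yields $\mu_\theta\to0$ and $M_\theta\to I_D/D$ entrywise, and hence in operator norm since the dimension $D$ is finite.
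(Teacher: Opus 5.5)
Your proposal is correct and follows the paper's argument: pathwise rotation equivariance of greedy admission makes the law of the accepted configuration rotation-invariant, so its mean measure is $\sigma$ (you verify this directly by averaging over an independent Haar rotation instead of citing uniqueness of the invariant probability). Chebyshev's inequality on each test function, together with uniform density, then gives weak convergence in probability and the moment limits; your explicit metric and the finiteness check $K\le\calP_{\Sphere^{D-1}}(\theta)$ for the jammed limit only spell out details the paper leaves implicit.
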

\begin{proof}
Rotating every arrival rotates the accepted configuration, without changing its law or cardinality. Its expected normalized measure is therefore the unique rotation-invariant probability $\sigma$. Under the variance assumption, Chebyshev's inequality gives convergence for each test function; finite approximation and compactness give weak convergence in probability. Coordinates and coordinate products yield the moment limits.
\end{proof}

Symmetry alone does not supply the variance assumption. For example, uniformly rotate each configuration in \cref{ex:saturated}. Its expected empirical measure is uniform, but its centroid norm still tends to $2/(5\pi)$. This is a counterexample to a symmetry-only argument, not a claim about iid-arrival RSA.

The required concentration is an established subject in RSA. Schreiber, Penrose, and Yukich \cite[Theorems 1.1--1.2]{rsa} prove mean and covariance asymptotics for saturated packing measures from homogeneous arrivals in expanding Euclidean cubes. After normalizing by the accepted count, these imply uniform spatial convergence in probability. Their proof controls spatial dependence through stabilization. In the one-dimensional parking model, the limiting occupied fraction is approximately $0.7476$, strictly below the optimal fraction one \cite[Section 1]{rsa}. Uniform limiting density therefore need not require near-maximum packing.

These results partly settle the original greedy-admission question in the homogeneous Euclidean setting. Applying them to curved product domains requires the corresponding stabilization and boundary estimates; rotational invariance by itself is not that extension. The remaining research question concerns how the candidate law, IC-screened domain, stopping rule, and growing history dimension determine the selected measure. Nonuniform arrivals are one possible source of asymmetry; the screen itself already breaks full rotational symmetry. We do not claim that every nonuniform arrival law must produce a nonuniform jammed limit.

\section{Finite criteria for mean--PC alignment}\label{sec:spectral}

All statements in this section apply to a finite stopped library. No limit, packing optimality, or independence assumption is required. The identities
\begin{equation}\label{eq:decomposition}
 M=\mu\otimes\mu+\Sigma,\qquad
 \tr M=1,\qquad\tr\Sigma=1-\norm{\mu}^2
\end{equation}
show why a count of separated signals is not enough. Residual energy can be spread thinly over many directions or remain concentrated in one component.

The packing constraint supplies a bound on mean energy. If $K>1$ and $\bar c$ is the average off-diagonal correlation, then
\begin{equation}\label{eq:meanbound}
 \rho:=\norm{\mu}^2=\frac1K+\frac{K-1}{K}\bar c
 \le\frac1K+\frac{K-1}{K}\cos\theta.
\end{equation}
It gives no positive lower bound on $\rho$. Under a two-sided rule $|c_{ij}|\le c$, one also has $\tr(M^2)\le c^2+(1-c^2)/K$. The one-sided rule does not yield that squared-correlation bound, since large negative inner products are allowed.

\begin{theorem}[Persistent mean versus largest residual component]\label{thm:alignment}
Let $\rho=\norm{\mu}^2>0$ and $\eta=\norm{\Sigma}_{\op}$. If $\eta<\rho$, then $M$ has a simple leading eigenvalue, with gap at least $\rho-\eta$, and
\begin{equation}\label{eq:tangent}
 \tan\angle(\mu,v_1(M))\le\frac{\eta}{\rho-\eta}.
\end{equation}
Angles are between unoriented axes. Therefore, for any array of finite libraries, even in changing dimensions, $\eta_j/\rho_j\to0$ implies mean--PC1 angular convergence.
\end{theorem}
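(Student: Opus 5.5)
The plan is to treat $M=\mu\otimes\mu+\Sigma$ from \eqref{eq:decomposition} as a rank-one operator perturbed by a positive semidefinite $\Sigma$ with $\norm{\Sigma}_{\op}=\eta$, and to carry out a direct two-block analysis along $u=\mu/\norm\mu$ and its orthogonal complement. This avoids quoting Davis--Kahan and gives the stated constants explicitly.

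\textbf{Step 1: the leading eigenvalue is simple and the gap is at least $\rho-\eta$.} For the upper bound on $\lambda_2(M)$, I use Courant--Fischer on the complement of $u$. For unit $w\perp u$, one has $\ip{Mw}{w}=\ip{\Sigma w}{w}\le\eta$, hence $\lambda_2(M)\le\eta$. For the lower bound, $\lambda_1(M)\ge\ip{Mu}{u}=\rho+\ip{\Sigma u}{u}\ge\rho$, since $\Sigma\succeq0$. Together these give $\lambda_1-\lambda_2\ge\rho-\eta>0$, so the leading eigenvalue is simple. Both the second-eigenvalue bound and $\ip{\Sigma u}{u}\ge 0$ will also be reused in Step 2.

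\textbf{Step 2: the tangent bound \eqref{eq:tangent}.} I write the leading unit eigenvector as $v_1=\cos\phi\,u+\sin\phi\,w$ with $w\perp u$ a unit vector and $\phi\in[0,\pi/2]$. If $\phi=\pi/2$, then $\lambda_1=\ip{Mv_1}{v_1}\le\eta<\rho$, contradicting Step 1, so $\cos\phi>0$. Next I take the inner product of the eigen-equation $Mv_1=\lambda_1v_1$ with $w$. Since $\ip{(\mu\otimes\mu)v_1}{w}=0$, this gives
\[
 \lambda_1\sin\phi=\cos\phi\,\ip{\Sigma u}{w}+\sin\phi\,\ip{\Sigma w}{w}.
\]
Rearranging yields $(\lambda_1-\ip{\Sigma w}{w})\tan\phi=\ip{\Sigma u}{w}$.

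Bounding this is the delicate point. The crude estimates $\lambda_1-\ip{\Sigma w}{w}\ge\rho-\eta$ and $|\ip{\Sigma u}{w}|\le\eta$ already give $\tan\phi\le\eta/(\rho-\eta)$, which is exactly the claimed bound. A sharper Cauchy--Schwarz in the $\Sigma$-inner product, $|\ip{\Sigma u}{w}|\le\sqrt{\ip{\Sigma u}{u}\ip{\Sigma w}{w}}$, is available but not needed. I will therefore use the crude route, after checking that $\lambda_1\ge\rho$ holds strictly enough to make the denominator positive, which it is because $\rho-\eta>0$. Since $\phi$ measures the angle between unoriented axes, a sign flip of $v_1$ changes nothing.

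\textbf{Step 3: the array statement.} Nothing in Steps 1--2 depends on the dimension $D$, on $K$, or on the separation angle. For an array with $\eta_j/\rho_j\to0$, eventually $\eta_j<\rho_j$, and
\[
 \tan\angle(\mu_j,v_1(M_j))\le\frac{\eta_j/\rho_j}{1-\eta_j/\rho_j}\to0.
\]
This gives the claimed angular convergence. The only care needed is the case $\rho_j>0$, which is assumed.
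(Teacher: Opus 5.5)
Your proposal is correct and follows the paper's proof: the same split along $\mu/\norm{\mu}$ and its orthogonal complement, min--max to get $\lambda_1\ge\rho$ and $\lambda_2\le\eta$, the exclusion of a purely transverse leading eigenvector, and the off-diagonal part of the eigen-equation, which yields $\eta/(\rho-\eta)$. The only difference is cosmetic: you pair the eigen-equation with the unit transverse direction $w$ to get a scalar identity, whereas the paper inverts $\lambda_1 I-B$ on the complement and takes norms.
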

\begin{proof}
Set $m=\mu/\norm{\mu}$ and split the space into $\spanop(m)\oplus m^\perp$. The blocks of $M$ are $\chi=\rho+\ip{m}{\Sigma m}\ge\rho$, $r_m=(I-m\otimes m)\Sigma m$, and $B=\Sigma|_{m^\perp}$ compressed to $m^\perp$. Thus $\norm{r_m}\le\eta$ and $\lambda_{\max}(B)\le\eta$. By the min--max principle, $\lambda_1(M)\ge\rho$ and $\lambda_2(M)\le\eta$. For a leading eigenvector $v=am+w$, $w\perp m$, the block equation gives $(\lambda_1I-B)w=a r_m$. The coefficient $a$ cannot vanish, since otherwise $\lambda_1\le\eta<\rho$. Taking the inverse and dividing by $|a|$ proves \eqref{eq:tangent}.
\end{proof}

The condition is sufficient, not necessary. Exact axial symmetry can give alignment even when $\eta/\rho$ does not tend to zero. If $\Sigma\ne0$, define its effective rank by $r_{\rm eff}=\tr\Sigma/\norm\Sigma_{\op}$. Then
\[
 \frac{\eta}{\rho}=\frac{1-\rho}{\rho\,r_{\rm eff}}.
\]
Increasing $K$ matters here only if it changes the mean or increases effective residual rank. At fixed ambient dimension, $r_{\rm eff}\le\min(K-1,D)$, so it cannot grow indefinitely.

A complementary exact calculation applies when $x_i=\sqrt\rho\,m+\sqrt{1-\rho}\,u_i$, with $u_i\perp m$, $\norm{u_i}=1$, and $\sum_i u_i=0$. Then the mean is $\sqrt\rho\,m$, and it is PC1 precisely when
\[
 \rho>(1-\rho)\lambda_{\max}\left(\frac1K\sum_i u_i\otimes u_i\right).
\]
For isotropic residuals on an $r$-dimensional subspace, the threshold is $\rho>1/(r+1)$. If instead only $|\ip{u_i}{u_j}|\le\epsilon$ is known, the residual eigenvalue is at most $[1+(K-1)\epsilon]/K$ by the Gram row-sum norm bound. These are conditions on \emph{residual} geometry, stronger than a cap on original signal correlations.

\section{The exponential capacity of histories}\label{sec:codes}

A simple construction respects every date's unit-norm constraint. Choose an orthonormal basis of $H$ and represent each coordinate of each date by a sign divided by $\sqrt q$. A binary word of length $D=qT$ gives a history in \eqref{eq:history}. Words at Hamming distance $d_{\rm H}$ produce history correlation $1-2d_{\rm H}/D$. For $0<\theta<\pi/2$ put $d_0=\lceil D(1-\cos\theta)/2\rceil$. A greedy binary-code construction gives
\begin{equation}\label{eq:gvfinite}
 \calP_{\calM_{q,T}}(\theta)
 \ge\frac{2^D}{\sum_{j=0}^{d_0-1}\binom Dj}.
\end{equation}
To prove this, choose a word and delete its Hamming ball of radius $d_0-1$, repeating until no words remain. Each selected word removes at most the denominator's number of candidates. This is the usual Gilbert argument \cite{gilbert}.

Writing $\HH_2(u)=-u\log_2u-(1-u)\log_2(1-u)$, the leading exponential rate is at least $1-\HH_2((1-\cos\theta)/2)$. For $60^\circ$ it is $0.1887218755$ per binary coordinate. \Cref{tab:gvbound} evaluates \eqref{eq:gvfinite} directly for the running example. A history that simply repeats one cross-section has only the original cross-section's packing capacity.
\begin{table}[tb]
\centering\small
\caption{The Gilbert bound \eqref{eq:gvfinite} at $\theta=60^\circ$ for the 20-asset case.}
\label{tab:gvbound}
\begin{tabular}{rrrr}
\toprule
$q$ & $T$ & $D=qT$ & $\log_{10}$ of the lower bound \\
\midrule
19 & 100 & 1,900 & 109.9182 \\
19 & 1,000 & 19,000 & 1081.8841 \\
\bottomrule
\end{tabular}
\end{table}

\begin{samepage}
For the spectral constructions below we use a random-linear-code argument of Varshamov type \cite{varshamov}, adding exact first- and second-moment balance to the distance condition.

\begin{lemma}[Separated binary words with exact two-coordinate balance]\label{lem:balancedcode}
Fix $0<\delta<1/2$ and $0<R<1-\HH_2(\delta)$. For all sufficiently large $N$, there is a binary linear code of length $N$, dimension $k=\lfloor RN\rfloor$, and minimum Hamming distance at least $\lceil\delta N\rceil$, such that its uniform sign representation $z\in\{-1,1\}^N$ satisfies
\begin{equation}\label{eq:balancedcode}
 \E z=0,\qquad\E zz^\top=I_N.
\end{equation}
\end{lemma}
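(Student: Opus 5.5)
We plan to prove the lemma with the random-generator-matrix argument of Varshamov, adding two extra low-probability failure events for the balance conditions. Let $G$ be a $k\times N$ matrix with iid uniform entries in $\mathbb F_2$, where $k=\lfloor RN\rfloor$. The code is $\calC_G=\{uG:u\in\mathbb F_2^k\}$. Its sign representation is $z=((-1)^{c_1},\ldots,(-1)^{c_N})$ for $c$ uniform on $\calC_G$. We will show that, with probability tending to one, $G$ has three properties. \emph{(i)} Every nonzero $u$ gives $uG$ of Hamming weight at least $\lceil\delta N\rceil$. \emph{(ii)} No column of $G$ is zero. \emph{(iii)} No two columns of $G$ coincide.

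First, we reduce \eqref{eq:balancedcode} to (i)--(iii) by a character-sum computation. If (i) holds, the map $u\mapsto uG$ is injective, so $c$ uniform on $\calC_G$ is the image of $u$ uniform on $\mathbb F_2^k$. Write $g_i$ for the $i$th column of $G$. Then $\E z_i=\E(-1)^{\ip{u}{g_i}}$, which vanishes exactly when $g_i\ne0$. Likewise $\E z_iz_j=\E(-1)^{\ip{u}{g_i+g_j}}$, which vanishes for $i\ne j$ exactly when $g_i\ne g_j$. The diagonal entries $z_i^2=1$ are automatic. Hence (i)--(iii) give $\E z=0$ and $\E zz^\top=I_N$. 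Since the code is linear, its minimum distance equals its minimum nonzero weight, so (i) is the distance claim. Injectivity also gives dimension exactly $k$.

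Next, we bound the failure probabilities with union bounds. For fixed $u\ne0$, the word $uG$ is uniform on $\mathbb F_2^N$. So $\Prob(\mathrm{wt}(uG)<\lceil\delta N\rceil)=2^{-N}\sum_{j<\delta N}\binom Nj\le2^{-N(1-\HH_2(\delta))}$. This uses the standard entropy bound on a Hamming ball of radius at most $\delta N$ with $\delta<1/2$. Summing over the fewer than $2^k\le2^{RN}$ nonzero $u$, the failure probability of (i) is at most $2^{-N(1-\HH_2(\delta)-R)}$. This tends to zero because $R<1-\HH_2(\delta)$. For (ii), each column is zero with probability $2^{-k}$, so (ii) fails with probability at most $N2^{-k}$. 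For (iii), each pair of distinct columns coincides with probability $2^{-k}$, so (iii) fails with probability at most $\binom N22^{-k}$. Both bounds are $O(N^22^{-RN+1})\to0$. For all sufficiently large $N$ the total failure probability is below one, so a good $G$ exists.

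We expect no serious obstacle. The one point needing care is the balance identity, namely that uniformity over codewords turns $\E z_iz_j$ into a character sum over $u$. This identity needs injectivity of $u\mapsto uG$, and (i) supplies it. The remaining details are only the ceiling bookkeeping in the ball bound, where $j<\lceil\delta N\rceil$ implies $j<\delta N$ wherever that matters. Conditions (ii)--(iii) say that the dual code has minimum distance at least $3$.
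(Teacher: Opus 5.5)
Your proposal is correct and follows the paper's proof essentially step for step. Both use a random generator matrix, a union bound with the entropy estimate for low-weight nonzero messages, and the $[N+\binom N2]2^{-k}$ bound for zero or repeated columns. The balance then comes from nonzero, distinct columns giving nontrivial, pairwise independent linear functionals; your character-sum computation simply writes out that last step explicitly.
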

\end{samepage}
\begin{proof}
Choose a $k\times N$ binary generator matrix with independent fair entries. For each nonzero message, the resulting word is uniform on $\{0,1\}^N$. A union bound shows that the chance of any nonzero message having weight below $\lceil\delta N\rceil$ is at most $2^{k-N}\sum_{j<\lceil\delta N\rceil}\binom Nj$, which tends to zero by the entropy bound and the strict rate inequality. The chance that a generator column is zero or two columns coincide is at most $[N+\binom N2]2^{-k}$, also tending to zero. Therefore a matrix avoiding both failures exists. The first condition gives injectivity and the required distance. Nonzero, distinct columns define nontrivial, pairwise independent binary linear functionals of a uniform message, giving \eqref{eq:balancedcode}.
\end{proof}

Two-coordinate balance is a deterministic property of the whole code under equal weighting. It is not a claim that arbitrarily many accepted signals are independent draws.

\section{Two large separated libraries with opposite spectral outcomes}\label{sec:constructions}

Fix orthonormal frames in $H$ at each date. In the following constructions, the common pole can be the return target or another specified direction.

\subsection{A weak common direction can become history PC1}\label{sec:positive}
Let $m_t\in\Sphere(H)$ and let $e_{1,t},\ldots,e_{q-1,t}$ be an orthonormal basis of $m_t^\perp$. Put $N=(q-1)T$, choose $0<\rho<\cos\theta$ with $\theta<\pi/2$, and use the sign words of \cref{lem:balancedcode} with
\[
 \delta=\frac{1-\cos\theta}{2(1-\rho)}<\frac12.
\]
For each word $z$ define
\begin{equation}\label{eq:positivecode}
 S_{z,t}=\sqrt\rho\,m_t+
 \sqrt{\frac{1-\rho}{q-1}}\sum_{a=1}^{q-1}z_{t,a}e_{a,t}.
\end{equation}
Every signal has unit norm. For two words at Hamming distance $d_{\rm H}$, their history correlation is $1-2(1-\rho)d_{\rm H}/N$, so the code has minimum angle at least $\theta$. The number of signals is $K=2^{\lfloor RN\rfloor}$ for any fixed admissible rate and sufficiently large $N$.

\begin{theorem}[Exact aligned codes and two different thresholds]\label{thm:alignedcode}
For \eqref{eq:positivecode}, write $m^{(T)}$ for the normalized history of the poles $m_t$. Let $P_U$ project onto the $N$-dimensional transverse history space. Then
\begin{equation}\label{eq:positivecov}
 \mu=\sqrt\rho\,m^{(T)},\qquad
 M=\rho\,m^{(T)}\otimes m^{(T)}+\frac{1-\rho}{N}P_U.
\end{equation}
The EWS is the unique history PC1 direction exactly when $\rho>1/(N+1)$. At a fixed date it is the unique signal-cloud PC1 direction exactly when $\rho>1/q$.
\end{theorem}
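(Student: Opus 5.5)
The approach is to write each history as a sum of a fixed pole part and a code-driven transverse part, and then read off both moments from the balance property \eqref{eq:balancedcode} of \cref{lem:balancedcode}.

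\textbf{Step 1: history decomposition.} Stacking \eqref{eq:positivecode} over dates with the factor $T^{-1/2}$ gives
\[
 x_z=\sqrt\rho\,m^{(T)}+\sqrt{1-\rho}\,U z/\sqrt N .
\]
Here $U$ is the isometry sending the coordinate $(t,a)$ to the history vector with $e_{a,t}$ in slot $t$ and zero elsewhere. The scalar checks out because $T^{-1/2}\sqrt{(1-\rho)/(q-1)}=\sqrt{1-\rho}/\sqrt N$. The range of $U$ is the transverse history space, and $UU^\top=P_U$. Since $m_t\perp e_{a,t}$ at every date, $m^{(T)}$ is orthogonal to the range of $U$.

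\textbf{Step 2: the two moments.} Averaging over the code and using $\E z=0$ gives $\mu=\sqrt\rho\,m^{(T)}$. Expanding $x_z\otimes x_z$ produces three kinds of terms. The cross terms are linear in $z$, so they vanish. The pure transverse term is $(1-\rho)U(\E zz^\top)U^\top/N=(1-\rho)P_U/N$. This proves \eqref{eq:positivecov}.

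\textbf{Step 3: the history PC1 threshold.} $M$ is diagonal in the orthogonal decomposition $\spanop\{m^{(T)}\}\oplus\operatorname{range}U\oplus(\text{rest})$. Its eigenvalues on these three pieces are $\rho$, then $(1-\rho)/N$ with multiplicity $N$, then $0$ on the remaining $T$-dimensional longitudinal complement. Because $\rho>0$, the zero level never competes. So $m^{(T)}$ is the unique top eigenvector exactly when $\rho>(1-\rho)/N$, which is equivalent to $\rho>1/(N+1)$. If instead $\rho\le(1-\rho)/N$, the top eigenvalue either equals the transverse level, which is degenerate since $N\ge2$ or ties with the pole, or lies on the transverse space. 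In both cases the EWS is not the unique PC1.

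\textbf{Step 4: the fixed-date threshold.} Take the cloud $\{S_{z,t}\}_z$ at one date $t$. Its mean is $\sqrt\rho\,m_t$. The restriction of $\E zz^\top=I_N$ to the block of date $t$ gives $\E z_{t,a}z_{t,b}=\delta_{ab}$. Its second moment is therefore
\[
 \rho\,m_t\otimes m_t+\frac{1-\rho}{q-1}\bigl(I_H-m_t\otimes m_t\bigr).
\]
The pole is the unique PC1 iff $\rho>(1-\rho)/(q-1)$, that is, iff $\rho>1/q$.

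\textbf{Main obstacle.} The only non-routine input is the exact second-moment identity $\E zz^\top=I_N$, including the within-date block, and \cref{lem:balancedcode} supplies it. The remaining care is bookkeeping: the normalizing constants, and checking that the unused longitudinal directions carry eigenvalue zero. For the "exactly when" claims I would treat the equality case $\rho=1/(N+1)$ explicitly, since there the eigenvalue ties and uniqueness fails.
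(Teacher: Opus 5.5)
Your proof is correct and follows the paper's own argument: the balance identities $\E z=0$ and $\E zz^\top=I_N$ remove the mixed terms, which gives the history eigenvalues $\rho$ and $(1-\rho)/N$ (the latter with multiplicity $N$) and the datewise moment $\rho\, m_t\otimes m_t+(1-\rho)(I-m_t\otimes m_t)/(q-1)$, and both thresholds are then direct eigenvalue comparisons. One small bookkeeping slip: the zero eigenspace is the $(T-1)$-dimensional complement of $m^{(T)}$ inside the $T$-dimensional longitudinal space, since $qT-1-N=T-1$, not a $T$-dimensional space; this does not affect either threshold.
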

\begin{proof}
The sign means and cross-coordinate moments in \eqref{eq:balancedcode} remove all mixed terms and give \eqref{eq:positivecov}. Its positive eigenvalues are $\rho$ and $(1-\rho)/N$, the latter with multiplicity $N$. There are also $T-1$ zero longitudinal eigenvalues. At one date the second moment is $\rho m_t\otimes m_t+(1-\rho)(I-m_t\otimes m_t)/(q-1)$. Comparing the respective eigenvalues proves both conditions.
\end{proof}

At the level of moments this is the degenerate three-level template $A=C=u=\rho$, so $\lambda_L=0$; it realizes the same block structure as \cref{thm:margin} without being uniform conditional volume. For $q=19$ and $\rho=0.01$, the datewise threshold fails, while the history threshold holds for every $T\ge6$ whenever the required balanced code is available. Thus a history principal component can reproduce the EWS without the corresponding datewise claim being true. The exact construction uses a fixed positive $\rho$; if $\rho$ changes with $T$, code existence must be checked along that array rather than inferred from the fixed-parameter lemma.

\begin{figure}[tb]
\centering
\includegraphics[width=0.94\textwidth]{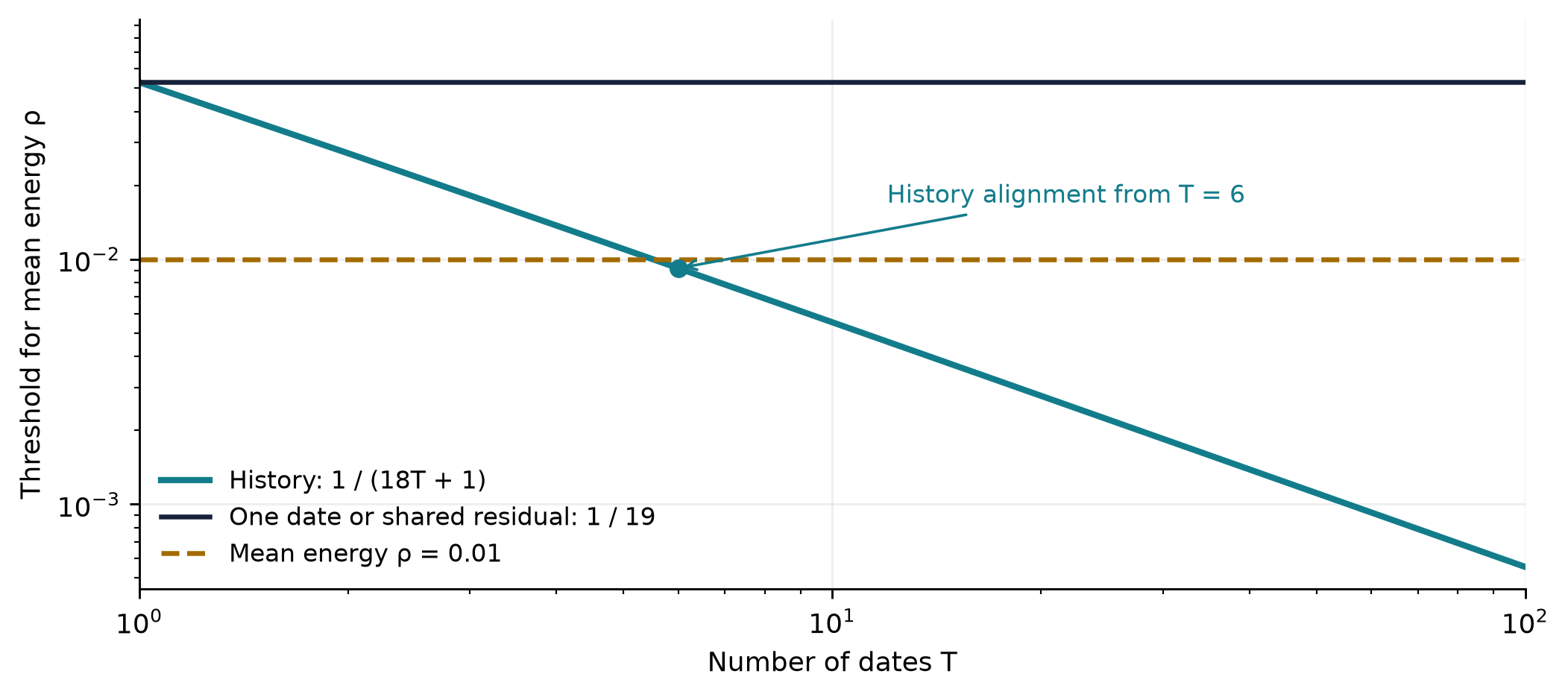}
\caption{Exact mean--PC thresholds in the balanced construction for $q=19$. Spreading residual energy over $(q-1)T$ history directions lowers the history threshold. Repeating a common residual through time does not. The marked $\rho=0.01$ is below the datewise threshold but above the history threshold from $T=6$ onward.}
\label{fig:thresholds}
\end{figure}

\subsection{Exponential size and separation can coexist with orthogonality}
Now assume $q\ge3$. Choose orthonormal $m_t,v_t$ and $q-2$ further basis vectors at each date, and set $N=(q-2)T$. Construct balanced residual histories $u_z$ from signs on those $q-2$ directions using \cref{lem:balancedcode} with $\delta=5/12$. Their pairwise correlations are at most $1/6$. Include both signs $s\in\{-1,1\}$ of a second common component:
\begin{equation}\label{eq:negativecode}
 S_{z,s,t}=\sqrt{0.1}\,m_t+\sqrt{0.3}\,s\,v_t+\sqrt{0.6}\,U_{z,t},
 \qquad \norm{U_{z,t}}=1.
\end{equation}
There are $K=2^{1+\lfloor RN\rfloor}$ signals for any $0<R<1-\HH_2(5/12)$ and sufficiently large $N$.

\begin{theorem}[Separated codes with EWS orthogonal to PC1]\label{thm:orthogonalcode}
All distinct histories in \eqref{eq:negativecode} have correlation at most $1/2$. With $m^{(T)},v^{(T)}$ denoting normalized histories,
\begin{equation}\label{eq:negativecov}
 \mu=\sqrt{0.1}\,m^{(T)},\qquad
 M=0.1\,m^{(T)}\otimes m^{(T)}+
 0.3\,v^{(T)}\otimes v^{(T)}+\frac{0.6}{N}P_U.
\end{equation}
For $N>2$, the unique history PC1 is $v^{(T)}$, orthogonal to EWS.
\end{theorem}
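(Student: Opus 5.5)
The plan is to follow the template of \cref{thm:alignedcode}: compute the pairwise correlations directly from the orthogonal decomposition, then read the moments off the balance identities of \cref{lem:balancedcode}.

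\textbf{Correlations.} At each date the three pieces $m_t$, $v_t$ and $U_{z,t}$ lie in mutually orthogonal subspaces, and $U_{z,t}$ is the normalized sign combination on the remaining $q-2$ basis vectors. Each signal therefore has unit norm. For two distinct histories indexed by $(z,s)$ and $(z',s')$, the history correlation is
\[
 c=0.1+0.3\,ss'+0.6\,\ip{u_z}{u_{z'}}.
\]
Here $u_z$ is the normalized residual history, with $\ip{u_z}{u_{z'}}=1-2d_{\rm H}/N$.

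I would split into two cases. If $z=z'$ then $s\ne s'$, so $c=0.1-0.3+0.6=0.4\le1/2$. If $z\ne z'$, the distance $d_{\rm H}\ge\lceil 5N/12\rceil$ gives $\ip{u_z}{u_{z'}}\le1/6$, so $c\le0.1+0.3+0.1=1/2$. Injectivity of the linear code makes all $K=2\cdot2^{\lfloor RN\rfloor}$ histories distinct, which confirms the count.

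\textbf{Moments.} For the mean, averaging over $s$ kills the $v$ term and $\E z=0$ kills the residual term, leaving $\sqrt{0.1}\,m^{(T)}$. For $M$, expand the tensor square. The cross terms involving $s$ vanish because $s$ is averaged independently of $z$ and $\E s=0$. The $m\otimes u$ terms vanish by $\E z=0$. The diagonal $v$ term gives $0.3\,v^{(T)}\otimes v^{(T)}$, since $s^2=1$. The residual term is $0.6\,\E u_z\otimes u_z$. By $\E zz^\top=I_N$, and because each coordinate of $u_z$ is $z_{t,a}/\sqrt{N}$ in an orthonormal transverse basis, this equals $(0.6/N)P_U$. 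This establishes \eqref{eq:negativecov}.

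\textbf{Spectrum and the main obstacle.} The three ranges are mutually orthogonal, so the eigenvalues are $0.1$, $0.3$ and $0.6/N$ (with multiplicity $N$), together with zeros. The top eigenvalue is the simple value $0.3$ exactly when $0.6/N<0.3$, that is, when $N>2$. Its eigenvector $v^{(T)}$ is orthogonal to $\mu$.

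The only delicate point is the existence of the code, which requires $N$ to be large enough for \cref{lem:balancedcode} with $\delta=5/12$. This is already built into the phrase ``sufficiently large $N$'' in the construction. Consequently the condition $N>2$ is automatic whenever the library exists, and it is stated only to record the exact spectral threshold.
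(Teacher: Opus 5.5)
Your proposal is correct and follows essentially the paper's own proof. It uses the same two-case correlation bound ($0.1-0.3+0.6=0.4$ for the same word with opposite $s$, and $0.1+0.3+0.6\cdot\tfrac16=\tfrac12$ for different words), obtains the moments by averaging over both signs of $s$ and applying the balance identities \eqref{eq:balancedcode}, and finishes with the eigenvalue comparison $0.3>\max(0.1,0.6/N)$. Your closing remark that $N>2$ is automatic is right under the lemma's rate: for small $N$ one has $\lfloor RN\rfloor=0$, and exact balance is then impossible. The remark is not needed, though, because the theorem states $N>2$ as a hypothesis.
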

\begin{proof}
For different residual words the largest correlation is $0.1+0.3+0.6(1/6)=0.5$. For the same word and opposite $s$ it is $0.1-0.3+0.6=0.4$. Unit norm follows from pointwise orthogonality. Averaging both signs of $s$ and the balanced code gives \eqref{eq:negativecov}. Its largest eigenvalue is $0.3$ when $N>2$.
\end{proof}

The code lemma permits every asymptotic rate $R<1-\HH_2(5/12)=0.0201312433\ldots$ bits per residual coordinate. This remains exponential. It is much smaller than the rate $0.1887218755$ of the unrestricted $60^\circ$ construction in \cref{sec:codes}, but the two rates are per coordinate of different spaces, the $N=(q-2)T$ residual coordinates here against all $qT$ coordinates there, and are not directly comparable. The residual factor survives in the second moment even though its signs cancel in the EWS. That cancellation is exact because the corpus is symmetric by construction: every signal $(z,s)$ has a partner $(z,-s)$ with the opposite loading on $v_t$. The example shows what separation permits; it is not a claim that a generic positive-IC library cancels its transverse structure in this way. When $m_t=Q_t$, all signals have positive IC; the cancellation is in transverse factor loadings, not in predictive orientation. This example satisfies the one-sided packing rule; no two-sided correlation claim is made. Neither this code nor the aligned code is asserted to have maximum cardinality in the full history domain. That qualification matters: \cref{thm:uniform} describes near-maximum small-angle codes on a \emph{fixed} domain, not these growing-dimensional constructions.

\begin{remark}[Disagreement is stable under small perturbations]\label{rem:orthostability}
Let $\mu_0=\sqrt{0.1}\,m^{(T)}$ and $M_0$ be the moments in \cref{thm:orthogonalcode}. Its leading gap is $g_0=0.3-\max(0.1,0.6/N)>0$ for $N>2$, and equals $0.2$ for $N\ge6$. Suppose a perturbed library has moments $\widetilde\mu,\widetilde M$ with
\[
 \epsilon=\norm{\widetilde M-M_0}_{\op}<g_0/2,\qquad
 d_\mu=\norm{\widetilde\mu-\mu_0}<\sqrt{0.1}.
\]
Its leading eigendirection $\widetilde v_1$ is unique, and
\begin{equation}\label{eq:orthostability}
 \left|\ip{\widetilde\mu/\norm{\widetilde\mu}}{\widetilde v_1}\right|
 \le\frac{2\epsilon}{g_0}+\frac{2d_\mu}{\sqrt{0.1}}.
\end{equation}
Indeed, Davis--Kahan \cite{daviskahan,yws} bounds the sine of the leading-axis change by $2\epsilon/g_0$, normalization changes the mean direction by at most $2d_\mu/\sqrt{0.1}$, and the original axes are orthogonal. Thus small temporal or coordinate perturbations leave the two directions close to orthogonal when both moment errors are small. A second-moment bound alone does not control the EWS direction.

This estimate is dimension independent and applies to each finite member of a growing-dimensional array satisfying its hypotheses. It does not certify pairwise separation after perturbation. Taking the code distance fraction strictly above $5/12$ leaves strict correlation slack at a smaller positive exponential rate; that slack can absorb sufficiently small perturbations of the individual histories.
\end{remark}

\section{The target frame and the correlations it preserves}\label{sec:gauge}

Write demeaned realized returns as $a_tQ_t$, with $Q_t\in\Sphere(H)$. For dates with zero magnitude choose $Q_t$ arbitrarily, since payoff is then zero. Fix a unit pole $e\in H$ and choose a common orthogonal map $G_t$ with $G_tQ_t=e$. The same map acts on every signal at that date:
\begin{equation}\label{eq:gaugesplit}
 G_tS_{i,t}=p_{i,t}e+U_{i,t},\qquad
 p_{i,t}=\ip{S_{i,t}}{Q_t},\quad U_{i,t}\perp e.
\end{equation}
For finite dates there is no continuity issue; in a population time model a measurable frame is sufficient. The block-diagonal history map is orthogonal. It preserves all history correlations, packing constraints, spectra, and EWS--PC angles. Choosing the frame cannot create a statistical symmetry.

The residual gauge freedom---the remaining freedom to rotate transverse coordinates---is $\prod_{t=1}^T O(q-1)$. Invariance of a \emph{joint distribution over signal designs} under independent actions of this group is substantially stronger than rotational symmetry of each date's marginal distribution. If only marginal symmetry is assumed, residual components may still remain strongly related through time.

\begin{proposition}[Joint residual symmetry separates the history spectrum]\label{prop:gaugespectrum}
Suppose a distribution of unit signal histories in the target frame is invariant under independent transverse orthogonal transformations at every date. Write $\gamma_t=\E p_t$, $\gamma=(\gamma_1,\ldots,\gamma_T)^\top$, $\xi_t=\E(1-p_t^2)$, and $B=T^{-1}\E[pp^\top]$. In orthonormal longitudinal history coordinates and the transverse coordinates,
\begin{equation}\label{eq:gaugespectrum}
 \mu=T^{-1/2}\gamma,\qquad
 M=B\ \oplus\ \bigoplus_{t=1}^T\frac{\xi_t}{T(q-1)}I_{q-1}.
\end{equation}
If $\gamma\ne0$, its direction is the unique history PC1 precisely when it is the unique leading eigendirection of $B$ and
\[
 \lambda_1(B)>\max_t\frac{\xi_t}{T(q-1)}.
\]
\end{proposition}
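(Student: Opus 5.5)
The plan is to compute $\mu$ and $M$ blockwise in the target frame and then read off the spectrum. By \eqref{eq:gaugesplit}, each history is $x=T^{-1/2}(p_1e+U_1,\ldots,p_Te+U_T)$ with $U_t\perp e$. The history space splits orthogonally into the longitudinal span of the vectors $T^{-1/2}$-scaled $e$ at each single date, which has dimension $T$, and the transverse spaces $e^\perp$ at each date, each of dimension $q-1$. In the orthonormal longitudinal basis $f_t=(0,\ldots,e,\ldots,0)$, the longitudinal coordinates of $x$ are $T^{-1/2}p_t$. This gives $\mu$'s longitudinal part directly as $T^{-1/2}\gamma$, and the longitudinal block of $M$ as $T^{-1}\E[pp^\top]=B$.

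Next I would use the invariance to kill everything else. Fix a date $s$ and apply the transformation that is $-I$ on $e^\perp$ at date $s$ and the identity elsewhere. It preserves the law, fixes every $p_t$ and every $U_t$ with $t\ne s$, and flips $U_s$. Hence $\E U_s=0$, so $\mu$ has no transverse part. It also gives $\E[p_tU_s]=0$, so the longitudinal--transverse cross blocks vanish, and $\E[U_t\otimes U_s]=0$ for $t\ne s$. For the diagonal transverse block $\E[U_s\otimes U_s]$, invariance under all of $O(q-1)$ at date $s$ makes it commute with the full orthogonal group of $e^\perp$. It is therefore scalar, since the action is irreducible for $q-1\ge1$; for $q-1=1$ this is trivial. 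Its trace is $\E\norm{U_s}^2=\E(1-p_s^2)=\xi_s$, because $\norm{S_{i,s}}=1$. After the factor $1/T$ this gives the block $\xi_s I_{q-1}/(T(q-1))$, completing \eqref{eq:gaugespectrum}.

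For the spectral criterion, $M$ is block diagonal, so its spectrum is the union of the spectrum of $B$ and the values $\xi_t/(T(q-1))$. The mean direction $\gamma/\norm\gamma$ lies in the longitudinal block, since $\gamma\ne0$. It is the unique top eigenvector of $M$ iff it is an eigenvector of $B$ for a simple eigenvalue $\lambda_1(B)$ that strictly exceeds every other eigenvalue of $M$. Within the longitudinal block this means $\gamma$ is the unique leading eigendirection of $B$. Against the transverse blocks it means $\lambda_1(B)>\max_t\xi_t/(T(q-1))$. Conversely, if either condition fails, there is a top eigenvector different from or tied with $\gamma$.

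The only delicate step is the transverse-block computation. It needs independence across dates of the group action, which is exactly the joint-symmetry hypothesis, to remove the cross-date terms $\E[U_t\otimes U_s]$. Marginal symmetry at each date would not suffice, so I would stress that the single-date sign flip is the key move.
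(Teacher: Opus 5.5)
Your proposal is correct and follows essentially the same route as the paper. A single-date transverse reflection removes the transverse mean, the longitudinal--transverse cross terms and the cross-date transverse terms; full $O(q-1)$ invariance at each date makes the diagonal transverse block scalar with trace $\xi_t$; and the spectral criterion then follows from the block-diagonal structure, which you spell out (simplicity of $\lambda_1(B)$ plus strict dominance over the transverse levels) where the paper just calls it immediate.
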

\begin{proof}
Applying a transverse reflection at one date forces $\E U_t=0$, all $\E[p_sU_t]$ to vanish, and $\E[U_tU_s^\top]=0$ for $s\ne t$. Rotations at that date force its transverse second moment to be scalar, with trace $\xi_t$. The history normalization supplies the factor $1/T$. The remaining block is $B$, and the eigenvalue comparison is immediate.
\end{proof}

In \cref{thm:alignedcode}, constant latitude gives $B=\rho\one\one^\top/T$. Its finite code has the moments in \eqref{eq:gaugespectrum}, although the discrete code itself need not have full orthogonal invariance. The symmetry is one sufficient explanation of these moments, not a necessary one.

For contrast, in a common transverse coordinate frame let $p_t=\sqrt\rho$ and $U_t=\sqrt{1-\rho}\,Z$ at every date, where $Z$ is uniform on $\Sphere^{q-2}$. Every marginal is rotationally symmetric. However, the transverse history covariance has nonzero eigenvalues $(1-\rho)/(q-1)$, independent of $T$. The history threshold remains $1/q$. Increasing the number of repeated dates has not spread the residual variation over new directions. A different choice of gauge may change the appearance of cross-time blocks, but it cannot change these eigenvalues.

The signal, IC, and PnL correlations retain different information \cite{nuneslarge,nunesic}. For a finite library,
\begin{equation}\label{eq:payoff}
 G_{ij}=\frac1T\sum_t\bigl(p_{i,t}p_{j,t}+\ip{U_{i,t}}{U_{j,t}}\bigr),\qquad
 C^{\Pi}_{ij}=\operatorname{Cov}_T(a_tp_{i,t},a_tp_{j,t}).
\end{equation}
Here $\operatorname{Cov}_T$ uses divisor $T$. The first matrix is the signal Gram matrix. Uncentered IC similarity keeps only its longitudinal term. PnL covariance additionally weights by return magnitude and centers through time. IC Pearson correlation also centers and rescales; PnL Pearson correlation further divides by each payoff standard deviation and is undefined for zero-variance histories.

These differences are concrete in \cref{thm:orthogonalcode}. If $m_t=Q_t$, every signal has $p_{i,t}=\sqrt{0.1}$, so every payoff history is $\sqrt{0.1}\,a_t$. When $a_t$ varies, PnL covariance is rank one with equal leading design weights, while the leading signal history $v^{(T)}$ has zero payoff at every date. Signal separation has not produced PnL separation.

For completeness, let $L_T:\R^{qT}\to\R^T$ map a history vector $x$ to $(a_t\ip{x_t}{Q_t})_t$, and let $P_0=I_T-\one\one^\top/T$. Its output on \eqref{eq:history} is the raw PnL vector divided by $\sqrt T$. Set $\mathcal L_T=P_0L_T$ and $\psi_i=\mathcal L_Tx_i$. The centered PnL second moment across designs and its equal-weight history are
\[
 A_\Pi:=\frac1K\sum_i\psi_i\otimes\psi_i=\mathcal L_T M\mathcal L_T^*,\qquad
 \overline\psi:=\frac1K\sum_i\psi_i=\mathcal L_T\mu.
\]
All these finite-dimensional maps are bounded. Along growing histories their norms can change with $\max_t|a_t|$; no uniform boundedness or infinite-dimensional adjoint identity is assumed. Finally, the asset-space return second moment $T^{-1}\sum_t a_t^2Q_t\otimes Q_t$ (or its temporally centered covariance) remains a separate operator. Target alignment is not return-PC alignment.

\section{Joint limits without silently changing the problem}\label{sec:joint}

\subsection{Specify an array, not an informal limit}
Let $j$ index finite problems with parameters $(K_j,T_j,\theta_j)$ and code $\calC_j\subset\calM_{q,T_j}$. For the applied corpus, additionally specify the positive-IC domain and any margin $\beta_j$. Keep $q$ fixed unless another dimension limit is explicitly introduced. The regimes are summarized in \cref{tab:regimes}.

For the unrestricted fixed domain define occupancy $\zeta_j=K_j/\calP_{\calM_{q,T}}(\theta_j)$. This is a theoretical comparison to maximum capacity, often unknown computationally. For a positive-IC corpus, replace that denominator by capacity in its specified compact admitted domain. A greedy stopping rule does not certify $\zeta_j\approx1$. When $T$ varies, even $\zeta_j\to1$ cannot be inserted into \cref{thm:uniform} without control uniform in $T$: that theorem fixes the domain before taking its limit.

Angles and spectral ratios can be compared within each member of the array without identifying vector spaces across $j$. To claim convergence to a particular history vector requires a common function space or specified embeddings and a consistent temporal interpretation. Bounded norm alone does not give strong compactness in an infinite-dimensional Hilbert space.

There is also a sampling obstruction. In a separable metric space, an iid infinite sequence from any fixed probability law almost surely fails a fixed positive separation rule. A countable cover by balls of diameter less than the separation contains a ball of positive probability; infinitely many draws enter it. Thus an indefinitely separated library is a selected, dependent configuration, not an unrestricted iid sample from one fixed law.

\subsection{Mean decay and residual dimension compete}
The balanced model makes the rate issue explicit. Put $N_j=(q-1)T_j$ and allow $\rho_j>0$. Whenever the stated moments are realized,
\begin{equation}\label{eq:criticalrate}
 \text{mean is unique history PC1}\quad\Longleftrightarrow\quad
 (N_j+1)\rho_j>1.
\end{equation}
Thus $\rho_j=N_j^{-a}$ eventually gives a leading mean for $a<1$ and a leading transverse space for $a>1$. The exactly tied sequence is $\rho_j=1/(N_j+1)$. The mean norm tends to zero in all these examples when $a>0$; every signal still has positive IC $\sqrt{\rho_j}$, but the IC margin shrinks. Angular alignment does not establish a nonvanishing raw ensemble.

The aligned-code gap is $g_j=\rho_j-(1-\rho_j)/N_j$. If a model approximation perturbs its second moment by $e_j$ in operator norm, a sufficient condition for persistence of its leading direction is $e_j/g_j\to0$, with $g_j>0$, by the Davis--Kahan bound \cite{daviskahan,yws}. The packing parameters alone do not control either quantity.

Taking $\theta\to0$ at each fixed $T$ under near-maximum packing gives the uniform product moments $\mu=0$, $M=I_{qT}/(qT)$. Growing $T$ after that does not recreate a mean. Conversely, the fixed-positive-angle coded constructions can retain a mean as $T$ grows. These are different selections and limit paths. We do not assert a formal noncommutation theorem for an unspecified common family. Establishing such a theorem would require defining that family and both limits first.

\subsection{Geometric capacity and statistical resolution}\label{sec:estimation}
To isolate temporal estimation, suppose a pool of $J\ge2$ candidate signal processes is fixed before an independent evaluation sample of $T$ iid dates. All candidates have unit norm at each date, but candidates may be arbitrarily dependent on one another. Let $c_{ij}$ be population time-averaged similarity and $\widehat c_{ij}$ its sample estimate. Hoeffding's inequality \cite{hoeffding} and a union bound give
\begin{equation}\label{eq:uniformerror}
 \Prob\!\left(\max_{i<j}|\widehat c_{ij}-c_{ij}|>\epsilon\right)
 \le J(J-1)e^{-T\epsilon^2/2}.
\end{equation}
Thus, with probability at least $1-\alpha$, all errors are bounded by
\[
 \epsilon_{J,T}(\alpha)=\sqrt{\frac{2}{T}\log\frac{J(J-1)}{\alpha}}.
\]
The maximum is over the candidate pool, not just the $K$ signals ultimately retained. Because the event is simultaneous, a correlation-based selection from that fixed pool may depend on the evaluation matrix. If models themselves are tuned on those dates, this calculation does not apply without an additional argument or fresh evaluation data.

On the simultaneous event, enforcing $\widehat c_{ij}\le\cos\theta-\epsilon_{J,T}$ certifies $c_{ij}\le\cos\theta$ for every retained pair. For any selected subset, normalize its population and empirical Gram matrices by $K$. Their operator-norm difference is at most $\epsilon_{J,T}$, by the maximum absolute row sum. If the population normalized Gram has leading gap $g>0$, its leading design vector obeys
\begin{equation}\label{eq:gramerror}
 \sin\angle(\widehat v_1,v_1)\le\frac{2\epsilon_{J,T}}{g}.
\end{equation}
This is a design-space estimate; synthesis into histories and any datewise claim use their respective maps and metrics, as in \cite{nuneslarge}.

Positive-IC admission has its own estimation error. Under the same fixed-pool and iid-date assumptions, if $\widehat{\bar p}_i$ estimates the population mean IC $b_i$, then
\[
 \Prob\!\left(\max_i|\widehat{\bar p}_i-b_i|>\epsilon\right)
 \le2J e^{-T\epsilon^2/2}.
\]
Consequently, a uniform error allowance $\epsilon_{\rm IC}=\sqrt{2\log(2J/\alpha_{\rm IC})/T}$ makes the rule $\widehat{\bar p}_i\ge\beta+\epsilon_{\rm IC}$ certify $b_i\ge\beta$ with confidence $1-\alpha_{\rm IC}$. When both admission and pairwise separation are certified, allocate the total error budget between their two bounds. Positivity observed in the selection sample is not automatically positivity on future dates.

At a fixed confidence level, $\log J=o(T)$ makes this conservative uniform correlation bound vanish. If $J$ grows exponentially in $T$, the bound need not shrink. The fact that exponentially many histories \emph{can exist} does not ensure uniformly accurate estimation of all their relationships from $T$ dates. These are sufficient bounds, not impossibility results for every structured model.

For a shrinking angle, $1-\cos\theta\sim\theta^2/2$. To make this worst-case error negligible relative to that angular resolution, a sufficient joint condition is
\begin{equation}\label{eq:jointresolution}
 \frac{\log(J/\alpha)}{T\theta^4}\longrightarrow0.
\end{equation}
Variance-sensitive inequalities or additional structure may improve the requirement. Temporal dependence requires its own concentration assumptions; replacing $T$ by an informal effective sample size is not a proof. Small sample eigengaps and PnL covariance estimates likewise need their own calibration.

\section{How to investigate an actual selected library}

The empirical question is not whether the sphere has enough room. It is whether the admission rule spreads residual variation while preserving useful mean structure. A study should record the candidate pool, its orientation rule, the average-IC admission threshold and evaluation window, the order in which designs arrive, the pairwise correlation metric and threshold, and whether rejected designs may later be reconsidered. These choices define different populations even when the final threshold is identical.

Use separate displays for the number of admitted signals, the minimum observed separation, and an estimated covering radius over a declared candidate domain. A lack of remaining candidates can establish exhaustion of that pool; it does not establish saturation of all unit histories or a maximum packing. Randomizing candidate order measures the order sensitivity of greedy admission.

For each retained library, report its minimum and average mean IC and whether the margin persists as the library grows. Report also $\rho=\norm\mu^2$, $\eta=\norm\Sigma_{\op}$, the residual effective rank, the actual leading eigengap, and the EWS--PC angle. For the uniform margin benchmark, also compare the observed longitudinal and transverse blocks with the three levels in \cref{eq:marginlevels}; disagreement tests the benchmark assumptions. When feasible, compare the same quantities at individual dates. A history-level alignment can coexist with datewise disagreement, exactly as \cref{thm:alignedcode} demonstrates.

Next separate the signal Gram into its IC and transverse terms in \eqref{eq:payoff}. Compare uncentered IC similarity, temporally centered IC covariance, dispersion-weighted PnL covariance, and variance-normalized PnL correlation. A diversity rule on one of these matrices should not be described as a diversity rule on all of them. Evaluate predictive and spectral claims on held-out dates and include uncertainty appropriate to the time dependence and selection process.

The source bundle, supplied as ancillary files with the arXiv version, reproduces packing counts, moment formulas, finite codes, and margin spectra. It includes exact quadrature, finite-history gap checks, tilted convolution with grid refinement, checks of the fixed-margin correction, simulations of the central-limit regime, and perturbations of datewise unit-signal libraries. These calculations check the moment formulas and illustrate the proved results.

\section{What is settled, and what needs further research}

The phrase ``the entire sphere in the limit'' needs a selection rule. Saturation gives coverage as separation shrinks; near-maximum packing on a fixed rectifiable domain gives uniform volume. On the unrestricted sphere that limit has zero mean and isotropic second moment. Positive-IC admission changes the domain and preserves a positive mean, but the uniform positive half still has no unique PC1. Under uniform product-history volume, every positive margin makes the target uniquely leading at every finite history length. At a fixed margin, the mean norm tends to the margin and the gap to its square. A margin of order $T^{-1/2}$ retains a positive fraction of candidates, but its mean norm and absolute gap vanish.

The growing-history constructions establish that minimum separation permits both mean--PC agreement and exact disagreement in exponentially large libraries. A finite residual-spectrum condition supplies a positive alignment criterion. The target frame then identifies which surviving components affect realized payoff, without treating a coordinate choice as an independence assumption.

Three questions remain open. First, extend the established homogeneous RSA framework to IC-screened product domains, nonuniform or adaptive candidate arrivals, and specified stopping rules. Second, obtain packing-to-moment bounds that remain useful as history dimension grows, with certified occupancy deficits. Third, combine these with spectral inference under time dependence and vanishing means or eigengaps.

\phantomsection\addcontentsline{toc}{section}{Acknowledgments and AI-assistance disclosure}
\section*{Acknowledgments and AI-assistance disclosure}
The author developed this note through interactive work with ChatGPT (OpenAI), which assisted with mathematical derivations, literature retrieval, exposition, and numerical checks. Feedback from Claude (Anthropic), across multiple review passes, and from Gemini (Google) informed revisions. The note distinguishes established external results from derived applications and open questions. It has not undergone independent human peer review; the author is responsible for the final claims and citations.

\medskip
\appendix
\section{Strict target dominance at every finite history length}\label{app:finitegap}
Let $f_r$ be the density of one coordinate of a uniform point on $\Sphere^{r-1}$, extended by zero outside $[-1,1]$:
\[
 f_r(p)=\frac{\Gamma(r/2)}{\sqrt\pi\,\Gamma((r-1)/2)}(1-p^2)^{(r-3)/2},\qquad |p|<1.
\]
Write $b=T\beta$, $P=\Prob(\sum_t p_t\ge b)$, and define convolution densities
\[
 d_1=f_{q+2}*f_q^{*(T-1)},\qquad
 d_2=f_{q+2}*f_{q+2}*f_q^{*(T-2)}\quad(T\ge2).
\]
A zeroth convolution power is the unit point mass at zero. The exact gap identities are
\begin{equation}\label{eq:exactfinitegaps}
 \lambda_0-\lambda_\perp=\frac{\beta d_1(b)}{qP},\qquad
 \lambda_0-\lambda_L=-\frac{d_2'(b)}{q^2P}\quad(T\ge2).
\end{equation}
The first is strictly positive because $d_1(b)>0$ for $0<b<T$. We prove the second is strictly positive as well.

\subsection{Deriving the two gaps}
The elementary density identities
\[
 (1-p^2)f_q(p)=\frac{q-1}{q}f_{q+2}(p),\qquad
 pf_q(p)=-\frac1q f_{q+2}'(p)
\]
give the integration-by-parts formula
$\E[p_1 \phi(s)]=\E[(1-p_1^2)\phi'(s)]/(q-1)$, where $s=\sum_t p_t$. Apply it to $\phi(s)=s\one_{\{s\ge b\}}$, by smooth approximation. The interior term is $(1-A_{\beta,T})P/(q-1)$, the boundary term is $b d_1(b)/q$, and the left side is $T u_{\beta,T}P$. Rearrangement gives the transverse gap.

For distinct dates, the signed density weighted by $p_1p_2$ is $q^{-2}d_2''$. Integrating over $[b,\infty)$ gives
$\E[p_1p_2\one_{\{s\ge b\}}]=-d_2'(b)/q^2$.
The block calculation gives $\lambda_0-\lambda_L=\E[p_1p_2\mid s\ge b]$, proving the other identity. These operations are valid also at $q=2$: $f_{q+2}$ vanishes at the endpoints, its first derivative is integrable, and $d_2'$ is continuous and absolutely continuous. No finite density of $s$ at zero is required.

\subsection{Why the convolution derivative is negative}
We use an elementary one-dimensional fact about even densities. Represent an even nonincreasing density $g$, supported on $[-a,a]$, as a mixture of interval indicators,
$g(y)=\int\one_{\{|y|<r\}}\,\omega(dr)$. Then
\[
 (f*g)'(x)=\int [f(x+r)-f(x-r)]\,\omega(dr).
\]
If $f$ and $g$ strictly decrease through the interiors of their positive supports, the integrand is nonpositive and is negative on a set of positive $\omega$-measure for every $x$ inside the positive support of the convolution. The derivative is therefore negative there. The same conclusion holds when $g$ is uniform on $[-1,1]$ and the support radius of the strictly decreasing $f$ is at least one. The formula follows by differentiating the integral of $f$ over $[x-r,x+r]$; all convolutions used below have the stated differentiability.

For $q\ge3$, $f_{q+2}$ strictly decreases on $(0,1)$, and $f_q$ is either strictly decreasing there or uniform ($q=3$). Starting with $f_{q+2}*f_{q+2}$ and convolving with the remaining $f_q$ factors proves $d_2'(b)<0$ for $0<b<T$.

For $q=2$, $f_2$ itself is not unimodal, so a separate argument is needed. Both $g_{22}=f_2*f_2$ and $g_{42}=f_4*f_2$ are even and strictly decreasing on $(0,2)$. For the first, setting $v=x/2\in(0,1)$ in its convolution integral gives
\[
 g_{22}(x)=\frac{2}{\pi^2(1+v)}\int_0^1
 \frac{du}{\sqrt{1-u^2}\sqrt{1-[(1-v)/(1+v)]^2u^2}}.
\]
The prefactor and integrand decrease as $v$ increases. For the second, $f_4$ is the first-coordinate density of a uniform unit disk and $f_2$ that of a uniform unit circle. Their planar sum has radial density
$f_{\rm rad}(r)=\pi^{-2}\arccos(r/2)$ for $0<r<2$, and zero for $r\ge2$. This follows by measuring the circle arc lying inside a unit disk centered at a point of radius $r$. The density is radially strictly decreasing; its projection $g_{42}(x)=\int_{\R}f_{\rm rad}(\sqrt{x^2+y^2})\,dy$ is consequently strictly decreasing for $0<x<2$.

For even $T$, write $d_2=f_4*f_4*g_{22}^{*(T-2)/2}$; for odd $T\ge3$, write $d_2=f_4*g_{42}*g_{22}^{*(T-3)/2}$. The interval-mixture argument again gives $d_2'(b)<0$ throughout $(0,T)$. This completes \cref{thm:margin} for every $q\ge2$. It proves strict ordering, not monotonic growth of the eigengap as $\beta$ changes.

\section{The conditional margin limit}\label{app:tilting}
Here $q$ and $0<\beta<1$ are fixed. Let $F$ be the uniform-sphere coordinate law, $\Lambda(\kappa)=\log\E_F e^{\kappa p}$, and $I(b)=\kappa_b b-\Lambda(\kappa_b)$, where $\Lambda'(\kappa_b)=b$. Strict convexity gives $I'(b)=\kappa_b>0$ for $0<b<1$.

We first establish the elementary tail limit
\begin{equation}\label{eq:tailrate}
 \frac1T\log\Prob_F(c\ge b)\longrightarrow-I(b),\qquad 0<b<1.
\end{equation}
Exponential Markov gives the upper bound. For a lower bound, choose $b'>b$ and $\epsilon>0$ with $b'-\epsilon>b$. Under the product tilted law $F_{\kappa_{b'}}^{\otimes T}$, the law of large numbers makes $|c-b'|\le\epsilon$ have probability tending to one. On this event the inverse likelihood ratio is at least
$\exp\{-T[\kappa_{b'}(b'+\epsilon)-\Lambda(\kappa_{b'})]\}$.
Let $\epsilon\downarrow0$ and then $b'\downarrow b$ to obtain the matching lower bound.

For every $\epsilon>0$ with $\beta+\epsilon<1$, \eqref{eq:tailrate} and strict increase of $I$ imply
\[
 \Prob(c\ge\beta+\epsilon\mid c\ge\beta)
 =\exp\{-T[I(\beta+\epsilon)-I(\beta)+o(1)]\}\longrightarrow0.
\]
Thus $c\to\beta$ conditionally. To identify one-date moments without assuming conditional independence, let $P_{\beta,T}$ be the conditional joint law of $(p_1,\ldots,p_T)$ and set $\kappa=\kappa_\beta$. Relative entropy gives the exact identity
\[
 D(P_{\beta,T}\Vert F_\kappa^{\otimes T})
 =-\log\Prob(c\ge\beta)-\kappa Tm_{\beta,T}+T\Lambda(\kappa)=o(T).
\]
By exchangeability all one-coordinate marginals are the same law $F_{\beta,T}^{(1)}$. The entropy decomposition against a product law gives
\[
 D(P_{\beta,T}\Vert F_\kappa^{\otimes T})
 =D\bigl(P_{\beta,T}\Vert (F_{\beta,T}^{(1)})^{\otimes T}\bigr)
 +T D(F_{\beta,T}^{(1)}\Vert F_\kappa).
\]
Nonnegativity therefore forces $D(F_{\beta,T}^{(1)}\Vert F_\kappa)\to0$. Pinsker's inequality and $|p|\le1$ give $A_{\beta,T}\to\E_\kappa p^2$. This proves the moment limits used in \cref{cor:marginlimit}. The method is the usual exponential-tilting and conditional-limit argument; the calculation here is supplied to make its application to the product-sphere screen explicit.

\subsection{Finite-history corrections and small margins}\label{app:refinements}
For fixed $q\ge2$ and $0<\beta<1$, put $\kappa=\kappa(q,\beta)$ and $v_\kappa=\Var_\kappa p$. The precise admission estimate is
\begin{equation}\label{eq:admissionprefactor}
 \Prob(c\ge\beta)\sim\frac{e^{-T I(\beta)}}{\kappa\sqrt{2\pi T v_\kappa}}.
\end{equation}
This is the nonlattice Bahadur--Ranga Rao prefactor \cite{bahadurrao}; see also \cite[Section 2]{ght}. The following local-limit argument also proves the moment correction \eqref{eq:finitecorrection}.

Let $l_T$ be the density of $Y_T=\sum_t p_t-T\beta$ under the tilted product law. The uniform density local central limit theorem gives
\[
 \sup_y\left|\sqrt{Tv_\kappa}\,l_T(y)-\varphi\!\left(\frac{y}{\sqrt{Tv_\kappa}}\right)\right|\longrightarrow0.
\]
Its hypotheses hold here: the tilted coordinate has bounded support, positive variance, and an absolutely continuous nonlattice law. Its characteristic function is $O(|u|^{-1/2})$ or faster, as follows by splitting the endpoint integrals and integrating by parts away from them; hence a convolution power has an integrable characteristic function. Fourier inversion, the quadratic expansion near zero, and a bound strictly below one away from zero give the displayed uniform limit. In particular, $\sqrt T\,l_T$ is uniformly bounded for all sufficiently large $T$.

Changing measure and using this bound, dominated convergence gives, for $j=0,1,2$,
\[
 e^{T I(\beta)}\E_F[Y_T^j\one_{\{Y_T\ge0\}}]
 =\int_0^\infty y^j e^{-\kappa y}l_T(y)\,dy
 \sim\frac{j!}{\kappa^{j+1}\sqrt{2\pi T v_\kappa}}.
\]
Here $Y_T$ denotes the same sum functional under the original law on the left. The $j=0$ case proves \eqref{eq:admissionprefactor}; dividing the other two cases by it proves conditional moment convergence to an exponential law of rate $\kappa$. The distributional limit follows in the same way using bounded test functions. Since $c=\beta+Y_T/T$, these moments give \eqref{eq:finitecorrection}. No uniformity as $\beta\downarrow0$ is asserted.

For fixed $q$ and $\beta\downarrow0$, the limiting coefficients in \cref{cor:marginlimit} satisfy
\begin{equation}\label{eq:smallbeta}
\begin{aligned}
 \kappa&=q\beta+\frac{q^2}{q+2}\beta^3+O(\beta^5),\\
 v_\kappa&=\frac1q-\frac{3\beta^2}{q+2}+O(\beta^4),\qquad
 \frac{\beta}{\kappa}=\frac1q-\frac{\beta^2}{q+2}+O(\beta^4),\\
 I(\beta)&=\frac{q\beta^2}{2}+\frac{q^2\beta^4}{4(q+2)}+O(\beta^6).
\end{aligned}
\end{equation}
These follow from $\E p^2=1/q$, $\E p^4=3/[q(q+2)]$, and
$\Lambda(\kappa)=\kappa^2/(2q)-\kappa^4/[4q^2(q+2)]+O(\kappa^6)$, followed by inversion of $\Lambda'(\kappa)=\beta$. Also $v_\kappa=d\beta/d\kappa$ and $I'(\beta)=\kappa$. Both residual coefficients approach $1/q$, consistent with the positive-half benchmark. Their difference begins at order $\beta^2$.

These expansions concern the fixed-margin limiting coefficients. Their residual difference is $2\beta^2/(q+2)+O(\beta^4)$, so for a sufficiently small margin held fixed before $T\to\infty$, the transverse level is eventually the larger residual. Substituting a shrinking $\beta_T$ into this statement would require additional error control.

For small fixed $\beta$, $2\beta/\kappa=2/q+O(\beta^2)$, explaining the useful approximation $\lambda_0\approx\beta^2+2/(qT)$ in a long history. In the normal reference, integration by parts gives
$h(z)=z+z^{-1}-2z^{-3}+O(z^{-5})$, hence
$1+zh(z)=z^2+2-2z^{-2}+O(z^{-4})$.
The same constant $2$ appears in the large-$z$ expansion of the central-limit expression. This agreement checks the two formulas; it is not a theorem uniform over an arbitrary joint path $z\to\infty$, $\beta\to0$.

The rarity calculation needs the same care. The leading small-$\beta$ exponential term is $e^{-Tq\beta^2/2}$, but it is not a finite-sample tail probability. In the normal reference, with $z=\sqrt{qT}\beta$,
\[
 \overline\Phi(z)\sim\frac{e^{-z^2/2}}{z\sqrt{2\pi}}
 \quad(z\to\infty).
\]
For $q=19$, $\beta=0.1$, and $T=100$, refined numerical convolution gives admission probability $6.29\times10^{-6}$. The normal reference gives $6.54\times10^{-6}$; the exponential term alone gives $7.49\times10^{-5}$. Omitting the prefactor overstates admission by about an order of magnitude. These numerical comparisons do not assert a uniform normal-tail approximation in arbitrary joint limits.

\section{Proof at the central-limit scale}\label{app:localscreen}
Set $Y_T=\sqrt{q/T}\sum_{t=1}^T p_t$. For fixed $q$, the iid uniform-sphere coordinates have mean zero, variance $1/q$, and fourth moment $3/[q(q+2)]$. Therefore
\[
 Y_T\Rightarrow Z\sim N(0,1),\qquad
 \E Y_T^4=3-\frac{6}{T(q+2)}\le3.
\]
For fixed $z\ge0$, the event $Y_T\ge z$ has probability tending to $\overline\Phi(z)>0$. The fourth-moment bound gives uniform integrability of $Y_T^2$, so truncation and weak convergence yield
\[
 \E[Y_T\mid Y_T\ge z]\to h(z),\qquad
 \E[Y_T^2\mid Y_T\ge z]\to1+zh(z).
\]
The normal identities follow by integrating $y\varphi(y)$ and $y^2\varphi(y)$ over $[z,\infty)$.

For the one-date moment, write $Y_T=W_T+\sqrt{q/T}\,p_1$, where $W_T$ is independent of $p_1$ and $W_T\Rightarrow N(0,1)$. Since $|p_1|\le1$, the conditional admission probability given $p_1$ is bounded between the tails of $W_T$ at $z\pm\sqrt{q/T}$. Both bounds tend to $\overline\Phi(z)$. It follows by bounded convergence that
\[
 \E[p_1^2\mid Y_T\ge z]\longrightarrow\E p_1^2=1/q.
\]
Now $qT u_{\beta_T,T}\to1+zh(z)$, $\sqrt{qT}\,m_{\beta_T,T}\to h(z)$, and $A_{\beta_T,T}\to1/q$. Substitution into \eqref{eq:marginlevels} proves \cref{prop:localscreen}. In particular, for $z>0$ the scaled gap has a strictly positive limit, even though the unscaled gap tends to zero.

\phantomsection
\end{document}